\newtheorem{theorem}{Theorem}
\newtheorem{proposition}[theorem]{Proposition}	
\newtheorem{corollary}[theorem]{Corollary}
\newtheorem{assumption}[theorem]{Assumption}	
\newtheorem{remark}{Remark}
\newcommand{\myref}[2]{\hyperref[#2]{#1 \ref*{#2}}} 
\newcolumntype{H}{@{}>{\iffalse}c<{\fi}@{}} 
\begin{document}
\title{Statistical Validation of Contagion Centrality in Financial Networks}
\author{Agathe Sadeghi\thanks{Stevens Institute of Technology, School of Business, Hoboken, NJ 07030. \textit{asadeghi@stevens.edu} Corresponding author.} \and Zachary Feinstein\thanks{Stevens Institute of Technology, School of Business, Hoboken, NJ 07030.}}

\date{}
\maketitle
\abstract{
In this paper, we introduce an impact centrality measure to evaluate shock propagation on financial networks capturing a notion of contagion and systemic risk contributions, permitting comparisons of these risks over time. 
In addition, we provide a statistical validation method when the network is estimated from data, as is done in practice. This statistical test allows us to reliably assess the computed centrality values.
We validate our methodology on simulated data and conduct empirical case studies using financial data. We find that our proposed centrality measure increases significantly during times of financial distress and is able to provide insights into the (market implied) risk-levels of different firms and sectors.
}\\

\noindent\textbf{Keywords:} Contagion Risk, Financial Stability, Network Analysis, Statistical Validation

\newpage

\section{Introduction}\label{sec:intro}
The growing complexity of financial systems requires the creation of a framework that can assess the interconnections among firms, see e.g.\ \cite[Figure 1]{feinstein2023interbank}. These connections play a crucial role in capturing systemic events and shocks that can lead to system instability. Researchers have shown growing interest in modeling financial system using network representations to gain a deeper understanding of these effects \citep{huang13, anderson19, jackson20, agosto20}. Emerging as a significant driver of instability, contagion, often described as the propagation of information or shocks across nodes, results in cascading effects that spread throughout the entire network, akin to an epidemic \citep{jalili17}.
Within this paper, we construct a network centrality measure designed to capture financial contagion. Due to the proposed construction, we are able to, e.g., statistically test our centrality measure; as far as the authors are aware, this is the first network centrality that can be analytically\footnote{For some centrality measures, confidence intervals can be calculated using bootstrapping which is computationally demanding, see e.g. \cite{anton20}.} tested, as opposed to other papers which test the connectivity of the network and then build the centrality measures \citep{billio12, gauber23, basu24}.

Evaluating a network involves two key aspects: the topology of the network and the determination of the importance of each node within it, referred to as the centrality measure. \cite{markose12} emphasizes that the network's topology plays a pivotal role in assessing its vulnerability.

Due to the recent financial events, it is important to view the financial system as a complex network, where firms serve as nodes and financial dependencies as links \citep{battiston}.
There are multiple perspectives on financial networks, e.g., (i) informational \citep{fink16}, (ii) interbank or default contagion \citep{acemoglu16}, (iii) portfolio overlap or price-mediated contagion \citep{cont17}, and (iv) cross-ownership \citep{fichtner17}. The construction of financial networks involves various methods, including correlation \citep{pacreau}, partial correlation \citep{kenett14,milington}, principal components analysis \citep{ballester16}, and analysis of financial dependencies based on balance sheets and contracts between institutions \citep{eisenberg1}. Studies have demonstrated that interconnectedness between institutions can be represented using weighted links, which indicate correlations between their portfolios \citep{diebold14, cabrales17} and that interpretable networks can be built on linear models \citep{diebold9,kumar22}.

Failures or contagion can be triggered by common factors, such as market declines, or by idiosyncratic shocks affecting individual institutions, like spread widening \citep{benoit17}. The consequences of a shock can be broken down into two parts: the immediate failure of an institution triggered by a sufficiently large shock, and the spread of the shock throughout the system due to network effects. While many studies focus on the direct collapse of a firm (fundamental default) resulting from an external shock, they often overlook the aggregated impact of network-induced contagion (default by contagion) as noted by \cite{cont} and \cite{deng21}. It is important to distinguish between the first-order effect, which could be the default of a firm or the occurrence of a shock impacting the firm, and the higher-order impacts due to these stresses. As highlighted by \cite{battiston13,veraart2020distress}, distress contagion can propagate through the network, causing widespread losses, even when there are no defaults within financial system. To measure these effects, it is essential to define a suitable centrality measure.

Centrality measures serve as valuable tools to quantitatively assess the structural significance of nodes within a network. As stated by \cite{darcangelis16}, a higher centrality score indicates that a particular property is more fitting for a given node. Different centrality measures offer distinct insights into various centrality dimensions \citep{hevey18}. In addition, the centrality notion becomes more crucial when the networks are not symmetric \citep{capponi15} which is the case in financial systems. When examining propagation within networks, \cite{pacreau} highlights that the centrality of a node indicates its sensitivity to market fluctuations and reveals its connectivity within the network. Nodes with high centrality are often connected to many other nodes or are neighbors to densely interconnected subnetworks. In related work, \cite{ghanbari18} examines the relationship between the number of failed nodes and centrality measures. Along this line, \cite{battiston} introduces DebtRank, utilizing feedback-centrality to measure the impact of a shock on a single node. Building on this concept, \cite{puliga14} presents Group Debtrank to quantify the centrality of nodes when small shocks affect firms and concludes that during crisis periods, the centrality measure tends to increase. \cite{bardos} extends the framework to generalized DebtRank, with applications explored in \cite{ferracci22} and \cite{carro22}. \cite{alexandre21} applies differential DebtRank to the Brazilian interbank market, while \cite{jiamin23} introduces GuaranteeRank to evaluate default risk in intercorporate credit guarantee networks. Furthermore, there are centrality indices meaning the index is a (mainly linear) combination of different centrality measures, see e.g. \cite{jaramillo14, wang21}. For a literature on centrality measures refer to \cite{bloch23}.

It is crucial to select an appropriate centrality measure that aligns with the attribute under consideration \citep{singh20}. In this work, we introduce a simple centrality measure that reflects the spread of a shock over the network structure based on the Leontief inverse \citep{leon1,leon2}. Widely used in input-output analysis in economics \citep{antras,antras2,antras3}, the Leontief inverse has also found applications in network science more recently \citep{,moran}. Certain studies in the financial network field have employed the Leontief inverse to establish the network structure directly \citep{elliot14,darcangelis16}, rather than applying it to compute the centrality.

Prior studies have typically produced a single value as the centrality measure. To ensure the validity of research findings, it becomes essential to gain a comprehensive understanding of the sensitivity of centrality measures \citep{borgatti6}. This need for accuracy analysis is further highlighted in the context of large and complex networks, where potential issues like missing data or the presence of hidden variables (confounders) may arise \citep{carley3}. The accuracy of centrality measures in network theory is influenced primarily by the deviation between the estimated network and the true underlying network. Existing literature on error analysis is limited by its focus on common centrality measures and tends to be case-specific, considering scenarios with predefined error types, levels, densities, and node numbers \citep{smith13,lee15,niu15}. Moreover, many of these measurement errors are qualitatively assessed \citep{martin19} and are primarily restricted to undirected and unweighted networks. Additionally, a common assumption made is that the true network is known, allowing for the introduction of various error types. However, this assumption is often unrealistic, as the true network is typically unknown in practice.

\subsection{Primary Contributions}
In this paper, we construct weighted directed networks, where nodes represent firms and the links depict interconnections between them. Herein we utilize the Katz-Bonacich (KB) centrality \citep{katz53,bonacich72}, referred to as KB, to quantify node centrality during shock scenarios, illustrating a node's position when hit by a shock. Notably, it considers the propagation and contagion risks inherent in the network topology, and boils down the shock propagation dynamics to a closed form. However, rather than solely utilizing this analytical form of this centrality measure, we add on a statistical test layer to determine if, e.g., the determined centrality is merely an artifact of the data. 

We define KB at pair-, node-, and system-levels, offering different granularities for analysis. Pair-level KB between two firms demonstrates the first firm's position when the second firm experiences a shock. Node-level KB of a firm is the total resulting shock when it is stressed. System-level KB aggregates multiple scenarios to portray overall centrality changes under equally probable situations. We also allow the association of weights with node- and system-level centralities.

Our approach involves statistical analysis to assess the robustness of the centrality measure. We derive the asymptotic distribution of KB using a Taylor expansion and formulate hypotheses to test its significance—specifically, whether KB significantly differs from zero and whether its value varies statistically across different nodes. Since centrality measures can be quite noisy, a statistically validated result could indicate no significant difference from zero, see e.g. \ \cite[Figure 14]{tolga14}. We validate our framework using simulated data, demonstrating the theoretical distribution's compatibility with empirical observations. In this simulated setup, the true value of KB consistently falls within the confidence intervals of the theoretical distribution at a 97.5\% confidence level.

Applicability in financial contexts is a key aspect of our paper. We evaluate the performance of our framework using two financial datasets -- Credit Default Swap Index (CDX) and equity tick data. In the CDX case study, KB effectively captures underlying market dynamics, outperforming both degree centrality and the leading eigenvalue. In the case of tick data, KB provides insights into the market and institutions' states but does not suffice for standalone investment decisions. However, a blend of node- and system-level KB dynamics along with institutional background information can guide investment strategies effectively.

The structure of this paper is as follows: \myref{Section}{sec:meas} outlines the approach used to construct the network and introduces the Katz-Bonacich centrality measure. With this definition, we detail the statistical framework for the KB in \myref{Section}{sec:err} and validate this framework using simulated data. We apply the KB approach to financial data scenarios in \myref{Section}{sec:case} to assess framework's applicability to capture stress events. Finally, in \myref{Section}{sec:conc}, we summarize the key findings of the paper and propose potential avenues for future research.
\section{Network Centrality}\label{sec:meas}

A network consists of a set of nodes and links. The connectivity pattern between nodes is captured by the $M \times M$ real-valued adjacency matrix $A\in \mathbb{R}^{M \times M}$, where $M$ denotes the number of nodes, and its elements $a_{ij}$, for $i,j=1,...,M$, indicate the presence or absence of connections between nodes. For our purpose, the elements $a_{ij}$ represent the connectivity weights between nodes $i$ and $j$ \citep{papana17}, specifically capturing the propagation of shocks between these nodes. In this paper, we consider the nodes to be the firms and the links to be the connections.
\begin{assumption}[Stationarity condition]\label{assumption:stationary}
For the remainder of this work, we assume that the leading eigenvalue of the adjacency matrix is strictly less than one.
\end{assumption}

We introduce a centrality measure that as opposed to eigenvector centrality (see, e.g., \cite{markose12-1,jorge,jose}) but similar to DebtRank \citep{debtrank}, is comparable over time. This measure defined as in \myref{Equation}{eq:leon} below, is readily interpretable in the context of financial contagion. 
\subsection{Katz-Bonacich Centrality Measure}
Our centrality measure is derived from Katz-Bonacich centrality (KB) \citep{katz53,bonacich72} defined as $KB_\alpha=(I-\alpha A)^{-1}\mathbf{1}$ where $\alpha > 0$ is a decay parameter, $I$ is the identity matrix and $\mathbf{1}$ is a vector of ones.
Notably, when $\alpha < \frac{1}{\lambda_m}$, where $\lambda_m$ denotes the leading eigenvalue of the adjacency matrix $A$, then the matrix inverse $(I - \alpha A)^{-1}$ is the Leontief inverse of $\alpha A$. 
Following \myref{Assumption}{assumption:stationary}, we can restrict $\alpha \in (0,1]$. If $\alpha \in (0,1)$ then this decay parameter assigns higher weight to shorter paths (direct relations). However, $\alpha = 1$ provides a natural decay to shock propagation through the network.\footnote{As the leading eigenvalue of $A$ is less than one, higher powers of $A$ contribute progressively less. This property ensures that indirect effects naturally decay in magnitude.}
For this reason, for the remainder of this work we will solely consider $\alpha = 1$ so as to consider the unadulterated network and denote $KB := KB_1 = (I - A)^{-1}\mathbf{1}$.

\begin{proposition}
Under \myref{Assumption}{assumption:stationary}, the pair-level KB is well-defined and can be simplified via the Leontief inverse:
\begin{align}
    \sum_{t = 1}^\infty A^t = (I-A)^{-1} - I\, ,
    \label{eq:leon}
    \tag{pair-level KB}
\end{align}
where $I$ is the $M \times M$ identity matrix.
\end{proposition}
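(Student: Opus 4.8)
The plan is to recognize the left-hand side as a matrix geometric (Neumann) series and to establish its convergence to $(I-A)^{-1} - I$ in three movements: verifying invertibility of $I - A$, evaluating the partial sums through a telescoping identity, and showing that the resulting remainder term vanishes.

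First I would argue that $I - A$ is invertible, which is what justifies the claim that the quantity is ``well-defined.'' By \myref{Assumption}{assumption:stationary} the leading eigenvalue of $A$ is strictly less than one, so the spectral radius satisfies $\rho(A) < 1$; in particular $1$ is not an eigenvalue of $A$, hence $\det(I - A) \neq 0$ and the Leontief inverse $(I-A)^{-1}$ exists.

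Next I would form the partial sums $S_N := \sum_{t=0}^{N} A^t$ and exploit the telescoping identity $(I-A)S_N = I - A^{N+1}$, obtained by direct expansion. Left-multiplying by $(I-A)^{-1}$ gives $S_N = (I-A)^{-1} - (I-A)^{-1} A^{N+1}$, so the whole proposition reduces to showing that $A^{N+1} \to 0$ (in any matrix norm, equivalently entrywise) as $N \to \infty$.

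The main obstacle is precisely this last convergence claim, because $\rho(A) < 1$ bounds the eigenvalues but does not immediately control the norm of the powers $A^{N}$. I would resolve it via Gelfand's spectral-radius formula $\lim_{N\to\infty} \|A^N\|^{1/N} = \rho(A)$: fixing any $r$ with $\rho(A) < r < 1$, there is an $N_0$ such that $\|A^N\| \le r^N$ for all $N \ge N_0$, whence $\|A^N\| \to 0$. (A Jordan-form alternative writes $A^N = P J^N P^{-1}$ and observes that each block entry is $\lambda^N$ times a polynomial in $N$, which vanishes whenever $|\lambda| \le \rho(A) < 1$.) It then follows that $(I-A)^{-1} A^{N+1} \to 0$, so $S_N \to (I-A)^{-1}$, i.e. $\sum_{t=0}^{\infty} A^t = (I-A)^{-1}$. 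Subtracting the $t=0$ term $A^0 = I$ yields $\sum_{t=1}^{\infty} A^t = (I-A)^{-1} - I$, as claimed.
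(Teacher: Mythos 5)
Your proof is correct, and it is genuinely more self-contained than the paper's. The paper's entire proof is the one-line identity $\sum_{t=0}^\infty A^t = I + \sum_{t=1}^\infty A^t = (I-A)^{-1}$, with the convergence of the Neumann series delegated wholesale to the citation of Leontief. You instead supply the full argument: invertibility of $I-A$ from the spectral condition, the telescoping identity $(I-A)S_N = I - A^{N+1}$ for the partial sums, and—the step the paper never addresses—the fact that $\rho(A)<1$ actually forces $A^{N}\to 0$, which you correctly recognize as nontrivial (the spectral radius does not bound $\|A^N\|$ term by term) and resolve via Gelfand's formula or the Jordan-form estimate. Both routes rest on the same underlying fact, so the difference is one of rigor versus economy: the paper buys brevity by outsourcing the analytic content to a reference, while your version makes the proposition independent of external results and makes explicit exactly where \myref{Assumption}{assumption:stationary} (read as $\rho(A)<1$, the interpretation the paper itself uses when discussing $\alpha < 1/\lambda_m$) enters the argument. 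Your handling also covers non-symmetric, non-diagonalizable $A$, which matters here since the adjacency matrices in this paper are estimated VAR coefficient matrices and need not be symmetric.
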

\begin{proof}
$\sum_{t = 0}^\infty A^t = I + \sum_{t = 1}^\infty A^t = (I-A)^{-1}$ per \cite{leon1}.
\end{proof}
\begin{remark}
This methodology can handle networks with cycles and does not impose the constraint of acyclic graph structures, as seen in some prior research (see, e.g., \cite{teter14}).
\end{remark}
\begin{remark}
KB is not solely influenced by the strength of the links but also by the overall structure of the network. A densely connected network with comparatively weaker links may yield a higher centrality measure than a sparsely connected network with stronger links.
\end{remark}
To compute the weighted node-level centrality when node $i$ is shocked, we consider the weighted summation of each row of the centrality measure matrix: 
\begin{gather}
    \boldsymbol{c} = [\,(
    I-A)^{-1}-I\,]\,\boldsymbol{w}\,,
    \tag{node-level KB}
    \label{eq:sumcent}
\end{gather}
where $\boldsymbol{c}$ is the vector of centrality and $\boldsymbol{w}=(w_{1},...,w_{M})^\top$ are node weights. This weight vector can be customized based on the specific context of the research. For example, it can be defined as the ratio of liabilities to assets of the institutions or the market capitalizations of the firms. Incorporating weights allows us to assign varying degrees of importance to each node when assessing the system's centrality. As highlighted by \cite{ali2016}, conventional centrality measures may not fully capture the systemic importance of financial firms as the size and structure of their balance sheets play a crucial role in understanding the potential risks they pose to the entire system. Larger values in this centrality measure indicate a higher level of potential spillover, particularly during times of crisis.

Finally, we define the system-level centrality as the total sum of all node-level centralities $\boldsymbol{c}_i$, in line with the approach outlined by \cite{debtrank}. This value provides a useful proxy for the total systemic risk captured in financial data.
\begin{gather}
    \sum_{i=1}^M \boldsymbol{c}_i\,.
    \tag{system-level KB}
    \label{eq:sys_lcm}
\end{gather}
\subsection{Financial Intuition of KB}
To present the KB measure more intuitively and in shock scenarios, in \myref{Figure}{fig:netw_schem} we present a schematic network diagram along with its shock propagation dynamics, intuitively similar to Impulse Response literature in econometrics \citep{koop96, pesaran98}. Consider four time series, corresponding to four nodes denoted as $W,X,Y$ and $Z$, which are connected as illustrated in the figure on the left. The shock propagation pattern is shown in the figure on the right side, under the scenario in which node $W$ experiences shock $\epsilon$ at time $t=0$. Due to the connections with nodes $X$ and $Y$, this shock subsequently propagates and affects these nodes at time $t=1$. Following the network, the shock then cycles back to node $W$ and hits $Z$ at $t=2$. This cascading effect goes on until the shock dissipates.
\begin{figure}[!tb]
\centering
\tikzset{every picture/.style={line width=0.75pt}} 
\begin{tikzpicture}[x=0.75pt,y=0.75pt,yscale=-1,xscale=1]
\draw   (67,126) .. controls (67,116.61) and (74.61,109) .. (84,109) .. controls (93.39,109) and (101,116.61) .. (101,126) .. controls (101,135.39) and (93.39,143) .. (84,143) .. controls (74.61,143) and (67,135.39) .. (67,126) -- cycle ;
\draw   (131,186) .. controls (131,176.61) and (138.61,169) .. (148,169) .. controls (157.39,169) and (165,176.61) .. (165,186) .. controls (165,195.39) and (157.39,203) .. (148,203) .. controls (138.61,203) and (131,195.39) .. (131,186) -- cycle ;
\draw   (130,125) .. controls (130,115.61) and (137.61,108) .. (147,108) .. controls (156.39,108) and (164,115.61) .. (164,125) .. controls (164,134.39) and (156.39,142) .. (147,142) .. controls (137.61,142) and (130,134.39) .. (130,125) -- cycle ;
\draw   (67,186) .. controls (67,176.61) and (74.61,169) .. (84,169) .. controls (93.39,169) and (101,176.61) .. (101,186) .. controls (101,195.39) and (93.39,203) .. (84,203) .. controls (74.61,203) and (67,195.39) .. (67,186) -- cycle ;
\draw    (84,143) -- (84,167) ;
\draw [shift={(84,169)}, rotate = 270] [color={rgb, 255:red, 0; green, 0; blue, 0 }  ][line width=0.75]    (10.93,-3.29) .. controls (6.95,-1.4) and (3.31,-0.3) .. (0,0) .. controls (3.31,0.3) and (6.95,1.4) .. (10.93,3.29)   ;
\draw    (101,126) -- (128,125.07) ;
\draw [shift={(130,125)}, rotate = 178.03] [color={rgb, 255:red, 0; green, 0; blue, 0 }  ][line width=0.75]    (10.93,-3.29) .. controls (6.95,-1.4) and (3.31,-0.3) .. (0,0) .. controls (3.31,0.3) and (6.95,1.4) .. (10.93,3.29)   ;
\draw    (148,143) -- (148,167) ;
\draw [shift={(148,169)}, rotate = 270] [color={rgb, 255:red, 0; green, 0; blue, 0 }  ][line width=0.75]    (10.93,-3.29) .. controls (6.95,-1.4) and (3.31,-0.3) .. (0,0) .. controls (3.31,0.3) and (6.95,1.4) .. (10.93,3.29)   ;
\draw    (101,186) -- (129,186) ;
\draw [shift={(131,186)}, rotate = 180] [color={rgb, 255:red, 0; green, 0; blue, 0 }  ][line width=0.75]    (10.93,-3.29) .. controls (6.95,-1.4) and (3.31,-0.3) .. (0,0) .. controls (3.31,0.3) and (6.95,1.4) .. (10.93,3.29)   ;
\draw    (135,176) -- (98.45,138.38) ;
\draw [shift={(97,137)}, rotate = 43.45] [color={rgb, 255:red, 0; green, 0; blue, 0 }  ][line width=0.75]    (10.93,-3.29) .. controls (6.95,-1.4) and (3.31,-0.3) .. (0,0) .. controls (3.31,0.3) and (6.95,1.4) .. (10.93,3.29)   ;
\draw    (147,108) .. controls (136,97) and (128,94) .. (117,94) .. controls (106.39,94) and (97.63,96.79) .. (85.35,107.77) ;
\draw [shift={(84,109)}, rotate = 317.29] [color={rgb, 255:red, 0; green, 0; blue, 0 }  ][line width=0.75]    (10.93,-3.29) .. controls (6.95,-1.4) and (3.31,-0.3) .. (0,0) .. controls (3.31,0.3) and (6.95,1.4) .. (10.93,3.29)   ;
\draw  [fill={rgb, 255:red, 245; green, 166; blue, 35 }  ,fill opacity=1 ] (279,100.5) .. controls (279,94.7) and (283.7,90) .. (289.5,90) .. controls (295.3,90) and (300,94.7) .. (300,100.5) .. controls (300,106.3) and (295.3,111) .. (289.5,111) .. controls (283.7,111) and (279,106.3) .. (279,100.5) -- cycle ;
\draw   (280,173.5) .. controls (280,167.7) and (284.7,163) .. (290.5,163) .. controls (296.3,163) and (301,167.7) .. (301,173.5) .. controls (301,179.3) and (296.3,184) .. (290.5,184) .. controls (284.7,184) and (280,179.3) .. (280,173.5) -- cycle ;
\draw   (279,135.5) .. controls (279,129.7) and (283.7,125) .. (289.5,125) .. controls (295.3,125) and (300,129.7) .. (300,135.5) .. controls (300,141.3) and (295.3,146) .. (289.5,146) .. controls (283.7,146) and (279,141.3) .. (279,135.5) -- cycle ;
\draw   (280,209.5) .. controls (280,203.7) and (284.7,199) .. (290.5,199) .. controls (296.3,199) and (301,203.7) .. (301,209.5) .. controls (301,215.3) and (296.3,220) .. (290.5,220) .. controls (284.7,220) and (280,215.3) .. (280,209.5) -- cycle ;
\draw   (356,101.5) .. controls (356,95.7) and (360.7,91) .. (366.5,91) .. controls (372.3,91) and (377,95.7) .. (377,101.5) .. controls (377,107.3) and (372.3,112) .. (366.5,112) .. controls (360.7,112) and (356,107.3) .. (356,101.5) -- cycle ;
\draw  [fill={rgb, 255:red, 245; green, 166; blue, 35 }  ,fill opacity=1 ] (356,136.5) .. controls (356,130.7) and (360.7,126) .. (366.5,126) .. controls (372.3,126) and (377,130.7) .. (377,136.5) .. controls (377,142.3) and (372.3,147) .. (366.5,147) .. controls (360.7,147) and (356,142.3) .. (356,136.5) -- cycle ;
\draw  [fill={rgb, 255:red, 245; green, 166; blue, 35 }  ,fill opacity=1 ] (356,174.5) .. controls (356,168.7) and (360.7,164) .. (366.5,164) .. controls (372.3,164) and (377,168.7) .. (377,174.5) .. controls (377,180.3) and (372.3,185) .. (366.5,185) .. controls (360.7,185) and (356,180.3) .. (356,174.5) -- cycle ;
\draw   (356,209.5) .. controls (356,203.7) and (360.7,199) .. (366.5,199) .. controls (372.3,199) and (377,203.7) .. (377,209.5) .. controls (377,215.3) and (372.3,220) .. (366.5,220) .. controls (360.7,220) and (356,215.3) .. (356,209.5) -- cycle ;
\draw  [fill={rgb, 255:red, 245; green, 166; blue, 35 }  ,fill opacity=1 ] (431,102.5) .. controls (431,96.7) and (435.7,92) .. (441.5,92) .. controls (447.3,92) and (452,96.7) .. (452,102.5) .. controls (452,108.3) and (447.3,113) .. (441.5,113) .. controls (435.7,113) and (431,108.3) .. (431,102.5) -- cycle ;
\draw   (432,137.5) .. controls (432,131.7) and (436.7,127) .. (442.5,127) .. controls (448.3,127) and (453,131.7) .. (453,137.5) .. controls (453,143.3) and (448.3,148) .. (442.5,148) .. controls (436.7,148) and (432,143.3) .. (432,137.5) -- cycle ;
\draw   (432,174.5) .. controls (432,168.7) and (436.7,164) .. (442.5,164) .. controls (448.3,164) and (453,168.7) .. (453,174.5) .. controls (453,180.3) and (448.3,185) .. (442.5,185) .. controls (436.7,185) and (432,180.3) .. (432,174.5) -- cycle ;
\draw  [fill={rgb, 255:red, 245; green, 166; blue, 35 }  ,fill opacity=1 ] (432,208.5) .. controls (432,202.7) and (436.7,198) .. (442.5,198) .. controls (448.3,198) and (453,202.7) .. (453,208.5) .. controls (453,214.3) and (448.3,219) .. (442.5,219) .. controls (436.7,219) and (432,214.3) .. (432,208.5) -- cycle ;
\draw  [fill={rgb, 255:red, 245; green, 166; blue, 35 }  ,fill opacity=1 ] (507,101.5) .. controls (507,95.7) and (511.7,91) .. (517.5,91) .. controls (523.3,91) and (528,95.7) .. (528,101.5) .. controls (528,107.3) and (523.3,112) .. (517.5,112) .. controls (511.7,112) and (507,107.3) .. (507,101.5) -- cycle ;
\draw  [fill={rgb, 255:red, 245; green, 166; blue, 35 }  ,fill opacity=1 ] (507,136.5) .. controls (507,130.7) and (511.7,126) .. (517.5,126) .. controls (523.3,126) and (528,130.7) .. (528,136.5) .. controls (528,142.3) and (523.3,147) .. (517.5,147) .. controls (511.7,147) and (507,142.3) .. (507,136.5) -- cycle ;
\draw  [fill={rgb, 255:red, 245; green, 166; blue, 35 }  ,fill opacity=1 ] (507,174.5) .. controls (507,168.7) and (511.7,164) .. (517.5,164) .. controls (523.3,164) and (528,168.7) .. (528,174.5) .. controls (528,180.3) and (523.3,185) .. (517.5,185) .. controls (511.7,185) and (507,180.3) .. (507,174.5) -- cycle ;
\draw   (507,209.5) .. controls (507,203.7) and (511.7,199) .. (517.5,199) .. controls (523.3,199) and (528,203.7) .. (528,209.5) .. controls (528,215.3) and (523.3,220) .. (517.5,220) .. controls (511.7,220) and (507,215.3) .. (507,209.5) -- cycle ;
\draw    (245,101) -- (277,100.53) ;
\draw [shift={(279,100.5)}, rotate = 179.16] [color={rgb, 255:red, 0; green, 0; blue, 0 }  ][line width=0.75]    (10.93,-3.29) .. controls (6.95,-1.4) and (3.31,-0.3) .. (0,0) .. controls (3.31,0.3) and (6.95,1.4) .. (10.93,3.29)   ;
\draw    (300,100.5) -- (354.32,135.42) ;
\draw [shift={(356,136.5)}, rotate = 212.74] [color={rgb, 255:red, 0; green, 0; blue, 0 }  ][line width=0.75]    (10.93,-3.29) .. controls (6.95,-1.4) and (3.31,-0.3) .. (0,0) .. controls (3.31,0.3) and (6.95,1.4) .. (10.93,3.29)   ;
\draw    (300,100.5) -- (354.79,172.91) ;
\draw [shift={(356,174.5)}, rotate = 232.88] [color={rgb, 255:red, 0; green, 0; blue, 0 }  ][line width=0.75]    (10.93,-3.29) .. controls (6.95,-1.4) and (3.31,-0.3) .. (0,0) .. controls (3.31,0.3) and (6.95,1.4) .. (10.93,3.29)   ;
\draw    (377,136.5) -- (429.31,103.57) ;
\draw [shift={(431,102.5)}, rotate = 147.8] [color={rgb, 255:red, 0; green, 0; blue, 0 }  ][line width=0.75]    (10.93,-3.29) .. controls (6.95,-1.4) and (3.31,-0.3) .. (0,0) .. controls (3.31,0.3) and (6.95,1.4) .. (10.93,3.29)   ;
\draw    (377,136.5) -- (430.79,206.91) ;
\draw [shift={(432,208.5)}, rotate = 232.62] [color={rgb, 255:red, 0; green, 0; blue, 0 }  ][line width=0.75]    (10.93,-3.29) .. controls (6.95,-1.4) and (3.31,-0.3) .. (0,0) .. controls (3.31,0.3) and (6.95,1.4) .. (10.93,3.29)   ;
\draw    (377,174.5) -- (430.3,207.45) ;
\draw [shift={(432,208.5)}, rotate = 211.72] [color={rgb, 255:red, 0; green, 0; blue, 0 }  ][line width=0.75]    (10.93,-3.29) .. controls (6.95,-1.4) and (3.31,-0.3) .. (0,0) .. controls (3.31,0.3) and (6.95,1.4) .. (10.93,3.29)   ;
\draw    (453,208.5) -- (506.1,103.29) ;
\draw [shift={(507,101.5)}, rotate = 116.78] [color={rgb, 255:red, 0; green, 0; blue, 0 }  ][line width=0.75]    (10.93,-3.29) .. controls (6.95,-1.4) and (3.31,-0.3) .. (0,0) .. controls (3.31,0.3) and (6.95,1.4) .. (10.93,3.29)   ;
\draw    (452,102.5) -- (506.32,137.42) ;
\draw [shift={(508,138.5)}, rotate = 212.74] [color={rgb, 255:red, 0; green, 0; blue, 0 }  ][line width=0.75]    (10.93,-3.29) .. controls (6.95,-1.4) and (3.31,-0.3) .. (0,0) .. controls (3.31,0.3) and (6.95,1.4) .. (10.93,3.29)   ;
\draw    (452,102.5) -- (506.79,174.91) ;
\draw [shift={(508,176.5)}, rotate = 232.88] [color={rgb, 255:red, 0; green, 0; blue, 0 }  ][line width=0.75]    (10.93,-3.29) .. controls (6.95,-1.4) and (3.31,-0.3) .. (0,0) .. controls (3.31,0.3) and (6.95,1.4) .. (10.93,3.29)   ;

\draw (75,119) node [anchor=north west][inner sep=0.75pt]   [align=left] {W};
\draw (140,119) node [anchor=north west][inner sep=0.75pt]   [align=left] {X};
\draw (141,179) node [anchor=north west][inner sep=0.75pt]   [align=left] {Z};
\draw (76,179) node [anchor=north west][inner sep=0.75pt]   [align=left] {Y};
\draw (227,92) node [anchor=north west][inner sep=0.75pt]   [align=left] {W};
\draw (257,128) node [anchor=north west][inner sep=0.75pt]   [align=left] {X};
\draw (256,166) node [anchor=north west][inner sep=0.75pt]   [align=left] {Y};
\draw (257,202) node [anchor=north west][inner sep=0.75pt]   [align=left] {Z};
\draw (272,65) node [anchor=north west][inner sep=0.75pt]   [align=left] {{\small t = 0}};
\draw (352,65) node [anchor=north west][inner sep=0.75pt]   [align=left] {{\small t = 1}};
\draw (427,65) node [anchor=north west][inner sep=0.75pt]   [align=left] {{\small t = 2}};
\draw (502,66) node [anchor=north west][inner sep=0.75pt]   [align=left] {{\small t = 3}};
\draw (256,83.4) node [anchor=north west][inner sep=0.75pt]    {$\epsilon $};
\draw (556,90.4) node [anchor=north west][inner sep=0.75pt]    {$\cdots $};
\draw (556,127.4) node [anchor=north west][inner sep=0.75pt]    {$\cdots $};
\draw (556,201.4) node [anchor=north west][inner sep=0.75pt]    {$\cdots $};
\draw (556,166.4) node [anchor=north west][inner sep=0.75pt]    {$\cdots $};
\end{tikzpicture}
\caption{\footnotesize{\textbf{Left} schematic network \textbf{Right} shock propagation through network over time}}
\label{fig:netw_schem}
\end{figure}

In order to formulate the cumulative impact of a shock more generally, we assume $M$ time series and define an initial shock vector $\boldsymbol{\epsilon} \in \mathbb{R}^M$, where element $i$ represents the magnitude of the shock to node $i$. The shock propagation dynamics can be shown as:
\begin{equation*}
    \begin{gathered}
    \label{eq:shock_prop}
    \boldsymbol{s}_t = \boldsymbol{s}_{t-1} A, \quad \boldsymbol{s}_0 = \boldsymbol{\epsilon} \quad \Rightarrow \quad \boldsymbol{s}_t = \boldsymbol{\epsilon} A^t,
    \end{gathered}
\end{equation*}
where $\boldsymbol{s}_t$ is the state of the system at time $t$ and $A$ is the adjacency matrix. By applying the adjacency matrix $A$ to each state iteratively, we can trace the propagation of the shock over time. Specifically, $\boldsymbol{s}_1=\boldsymbol{s}_0A$, $\boldsymbol{s}_2=\boldsymbol{s}_1A=\boldsymbol{s}_0A^2$, and so on. Each term in the series $\boldsymbol{\epsilon}A^t$ represents the effect of the shock propagating through the network over different time intervals, accounting for higher-order interactions and contagion effects. The summation
\[ \sum_{t=1}^\infty \boldsymbol{s}_t = \boldsymbol{\epsilon}A+\boldsymbol{\epsilon}A^2+... = 
\boldsymbol{\epsilon}\,\underbrace{(A + A^2 + ...)}_{\text{pair-level KB}}\] 
captures the cumulative effect of the shock propagation and represents the centrality of the system. 
Each element in this matrix represents the pair-level KB, i.e., the centrality of node $j$ (corresponding to the column index) when node $i$ (corresponding to the row index) is hit by a shock. Note that this pair-level KB is independent of the initial shock $\boldsymbol{\epsilon}$.
\begin{remark}
Though we constructed the above with a static network, we note that both the adjacency matrix and the weight vector for the node-level KB can be time-varying. This will be used in the financial case studies of \myref{Section}{sec:case}. For the purposes of this work, recall that the adjacency matrix is provides weighted propagation of shocks through the network. Notably, this adjacency matrix can be derived using various methodologies, e.g., correlations, partial correlations, a vector autoregression model, or linear causality networks.
\end{remark}
\begin{remark}
When aiming to isolate idiosyncratic components from time series, one approach is to account for the influence of common factors. Techniques like PCA or Singular Value Decomposition (SVD) or spectral reduction \citep{ricciardi22} can identify dominant structures associated with these factors. However, to preserve the uniqueness of individual series, it is more appropriate to filter directly using observed factors relevant to the data, such as those derived from economic or financial theory.
\end{remark}
\section{Statistical Analysis}\label{sec:err}
We extend the scope of traditional centrality calculations by developing the asymptotic distribution of the KB measure. This advancement not only allows for the computation of confidence intervals, but also facilitates statistical hypothesis testing. By deriving the distributional properties of the KB, we provide a framework to evaluate the significance of contagion risk estimates. This contributes to a deeper understanding of the measure's reliability and makes it possible to compare results across different networks or over time with a higher degree of statistical rigor.

\subsection{Distribution of KB}
In practice, a discrepancy between the true adjacency matrix $A$ and its estimated counterpart $\hat{A}$ potentially exists. We denote this difference as $\Delta$, such that $\hat{A} = A + \Delta$. For $T$ data points, taking advantage of the Taylor series expansion, we can express the KB in terms of the discrepancy as follows:
\begin{gather}
    (I-\hat{A})^{-1} = (I-A)^{-1} + (I-A)^{-1}\Delta(I-A)^{-1} + O\,(\frac{1}{T})\,.
    \label{eq:leon_taylor}
\end{gather}

By applying the Delta method \citep{delta_econ}, we are able to deduce the asymptotic distribution for the pair-level KB.
Throughout this section we consider the shorthand notation where $\cdot j$ denotes the $j$th column and $i\cdot$ denotes the $i$th row, respectively. 
\begin{theorem}[KB Distribution] 
Assume $\Delta_{\cdot j} \sim N(0,\Sigma_{jj})$ where $Cov(\Delta_{\cdot i},\Delta_{\cdot j}) = \Sigma_{ij}$.
Then for every $i,j$ we have the following asymptotic distribution for the pair-level KB:\label{theo:lcm_dist}
\begin{align*}
&\sqrt{T}\left[(I-\hat{A})^{-1}_{ij} - (I-A)^{-1}_{ij}\right]\\
&\qquad \sim N\left(0,\left[(I-A)^{-1}_{\cdot j}\right]^\top \left[(I-A)^{-1}_{i\cdot}\Sigma_{lk}\left[(I-A)^{-1}_{i\cdot}\right]^\top\right]_{lk}(I-A)^{-1}_{\cdot j}\right)\, ,
\end{align*}
\end{theorem}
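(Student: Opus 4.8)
The plan is to read \eqref{eq:leon_taylor} as the first-order linearization required by the Delta method and then push the (asymptotically) Gaussian estimation error $\Delta = \hat A - A$ through the map $A \mapsto (I-A)^{-1}$. Writing $B := (I-A)^{-1}$ for brevity, \eqref{eq:leon_taylor} gives
\[
\sqrt{T}\left[(I-\hat A)^{-1} - B\right] = \sqrt{T}\,B\,\Delta\,B + O\!\left(\tfrac{1}{\sqrt T}\right).
\]
First I would note that, after rescaling by $\sqrt T$, the quadratic remainder is $O(1/\sqrt T)$ and hence vanishes in the limit, so by Slutsky's theorem the asymptotic law is governed entirely by the linear term $\sqrt{T}\,B\Delta B$. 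Interpreting the hypothesis in the natural Delta-method sense (the scaled error $\sqrt T\,\Delta$ is asymptotically mean-zero Gaussian with column covariances $Cov(\Delta_{\cdot i},\Delta_{\cdot j}) = \Sigma_{ij}$), this linear term is a fixed linear image of jointly Gaussian vectors, hence asymptotically mean-zero Gaussian; the entire content of the theorem then reduces to computing the variance of its $(i,j)$ entry.

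The core computation extracts that entry and reorganizes it as a linear functional of the columns of $\Delta$. Starting from the double sum $[B\Delta B]_{ij} = \sum_{l,k} B_{il}\,\Delta_{lk}\,B_{kj}$ and grouping terms by the column index $k$ of $\Delta$, I would rewrite
\[
[B\Delta B]_{ij} = \sum_{k=1}^M B_{kj}\,\big(B_{i\cdot}\,\Delta_{\cdot k}\big) = \sum_{k=1}^M \big[(I-A)^{-1}_{\cdot j}\big]_k\, Y_k, \qquad Y_k := (I-A)^{-1}_{i\cdot}\,\Delta_{\cdot k}.
\]
This exhibits the scalar of interest as the inner product of the deterministic weight vector $(I-A)^{-1}_{\cdot j}$ with the random vector $\boldsymbol{Y} = (Y_1,\dots,Y_M)^\top$, where each $Y_k$ is a scalar linear functional of the Gaussian column $\Delta_{\cdot k}$, so that $\boldsymbol{Y}$ is mean-zero and (jointly) Gaussian.

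It then remains to assemble the variance as a quadratic form. By bilinearity of covariance and the assumption $Cov(\Delta_{\cdot l},\Delta_{\cdot k}) = \Sigma_{lk}$, the $(l,k)$ entry of $Cov(\boldsymbol{Y})$ is the scalar
\[
Cov(Y_l, Y_k) = (I-A)^{-1}_{i\cdot}\,\Sigma_{lk}\,\big[(I-A)^{-1}_{i\cdot}\big]^\top,
\]
which is exactly the $M\times M$ array $\big[(I-A)^{-1}_{i\cdot}\Sigma_{lk}[(I-A)^{-1}_{i\cdot}]^\top\big]_{lk}$ appearing in the statement. Sandwiching this covariance matrix between the weight vector yields
\[
Var\big([B\Delta B]_{ij}\big) = \big[(I-A)^{-1}_{\cdot j}\big]^\top \Big[(I-A)^{-1}_{i\cdot}\,\Sigma_{lk}\,[(I-A)^{-1}_{i\cdot}]^\top\Big]_{lk}\,(I-A)^{-1}_{\cdot j},
\]
which matches the claimed variance; combined with the $\sqrt T$ rescaling and the negligible remainder, this gives the stated asymptotic normal law.

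The hard part is bookkeeping rather than analysis: correctly reorganizing the double sum so that the row $(I-A)^{-1}_{i\cdot}$ acts \emph{inside} each column of $\Delta$ while the column $(I-A)^{-1}_{\cdot j}$ plays the role of the \emph{outer} weight vector, and then identifying the resulting array of scalars with the $(l,k)$-indexed covariance matrix in the statement. Two points demand care: first, verifying that the $O(1/T)$ remainder in \eqref{eq:leon_taylor} stays negligible after multiplication by $\sqrt T$, which is the only genuinely asymptotic ingredient and what legitimizes the linearization; and second, keeping straight the role of $\Sigma_{lk}$ as the full $M\times M$ cross-covariance between columns $l$ and $k$ of $\Delta$, so that $(I-A)^{-1}_{i\cdot}\Sigma_{lk}[(I-A)^{-1}_{i\cdot}]^\top$ is a bona fide scalar (row vector times matrix times column vector) and the outer quadratic form in $(I-A)^{-1}_{\cdot j}$ is well-defined.
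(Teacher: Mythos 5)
Your proposal is correct and follows essentially the same route as the paper's proof: linearize via the first-order Taylor expansion \eqref{eq:leon_taylor}, conclude asymptotic normality from the linearity of $(I-A)^{-1}_{i\cdot}\Delta(I-A)^{-1}_{\cdot j}$ in the Gaussian $\Delta$, and compute the variance by sandwiching the covariance matrix of $(I-A)^{-1}_{i\cdot}\Delta$ (your vector $\boldsymbol{Y}$) between $(I-A)^{-1}_{\cdot j}$ and its transpose. Your treatment is in fact slightly more careful than the paper's on two points it leaves implicit — the explicit double-sum bookkeeping identifying the $(l,k)$-indexed covariance array, and the Slutsky argument that the $O(1/T)$ remainder vanishes after $\sqrt{T}$-rescaling — but these are refinements of the same argument, not a different one.
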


The detailed proof of \myref{Theorem}{theo:lcm_dist} is given in the \myref{Appendix}{app:pair-level}. This consideration of the pair-level KB distribution permits us to consider, also, the distribution of node-level KB.
\begin{corollary}\label{cor:node-level_lcm}
Consider the same setting as in \myref{Theorem}{theo:lcm_dist}. The node-level KB, defined in \myref{Equation}{eq:sumcent}, asymptotically follows the normal distribution:
\begin{align*}
&\sqrt{T}\left[(I-\hat{A})^{-1}_{i\cdot}\boldsymbol{w} - (I-A)^{-1}_{i\cdot}\boldsymbol{w}\right]\\
&\qquad \sim N\left(0,\left[(I-A)^{-1}\boldsymbol{w}\right]^\top \left[(I-A)^{-1}_{i\cdot}\Sigma_{lk}\left[(I-A)^{-1}_{i\cdot}\right]^\top\right]_{lk}(I-A)^{-1}\boldsymbol{w}\right)\,
\end{align*}
for every $i = 1,...,M$
where the weight vector is denoted as $\boldsymbol{w} = (w_{1},...,w_{M})^\top$. 
\end{corollary}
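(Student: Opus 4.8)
The plan is to obtain the node-level distribution as a linear image of the pair-level result already established in \myref{Theorem}{theo:lcm_dist}. Writing $B := (I-A)^{-1}$ for brevity, recall that the node-level KB for node $i$ is $\boldsymbol{c}_i = B_{i\cdot}\boldsymbol{w} - w_i$, with estimated counterpart $\hat{\boldsymbol{c}}_i = (I-\hat{A})^{-1}_{i\cdot}\boldsymbol{w} - w_i$. The constant term $-w_i$ cancels in the difference, so the quantity whose distribution we seek is exactly $\sqrt{T}[(I-\hat{A})^{-1}_{i\cdot}\boldsymbol{w} - B_{i\cdot}\boldsymbol{w}] = \sum_{j=1}^M w_j\,\sqrt{T}[(I-\hat{A})^{-1}_{ij} - B_{ij}]$. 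Thus the node-level statistic is a fixed linear combination, with weights $w_j$, of the pair-level statistics for the same row $i$ across all columns $j$.

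First I would insert the Taylor expansion \myref{Equation}{eq:leon_taylor} to reduce each pair-level increment to a linear functional of $\Delta$, namely $\sqrt{T}[(I-\hat{A})^{-1}_{ij} - B_{ij}] = \sqrt{T}\,B_{i\cdot}\Delta B_{\cdot j} + O(1/\sqrt{T})$. Summing against the weights and using $\sum_{j} w_j B_{\cdot j} = B\boldsymbol{w}$, the whole node-level increment collapses to the single linear functional $\sqrt{T}\,B_{i\cdot}\Delta\,(B\boldsymbol{w}) + O(1/\sqrt{T})$. Since the columns of $\Delta$ are jointly Gaussian by assumption, this scalar is, in the limit, a linear combination of jointly Gaussian variables and is therefore asymptotically normal with mean zero; only its variance remains to be identified.

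Next I would compute that variance. The cleanest route is to treat the vector of pair-level statistics $\big(\sqrt{T}[(I-\hat{A})^{-1}_{ij}-B_{ij}]\big)_{j=1}^M$ as jointly asymptotically normal (all are linear in the same $\Delta$) and to write down its cross-covariances, generalizing the pair-level variance of \myref{Theorem}{theo:lcm_dist} from the case $j=j'$ to arbitrary $j,j'$; each such entry is $B_{\cdot j}^\top\big[B_{i\cdot}\Sigma_{lk}B_{i\cdot}^\top\big]_{lk}B_{\cdot j'}$. The node-level variance is then the quadratic form $\boldsymbol{w}^\top\mathrm{Cov}\,\boldsymbol{w} = \sum_{j,j'} w_j w_{j'}\,B_{\cdot j}^\top[B_{i\cdot}\Sigma_{lk}B_{i\cdot}^\top]_{lk}B_{\cdot j'}$. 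Pulling the weights inside and again invoking $\sum_j w_j B_{\cdot j} = B\boldsymbol{w} = (I-A)^{-1}\boldsymbol{w}$ replaces both outer column vectors by $(I-A)^{-1}\boldsymbol{w}$, yielding exactly the stated variance $[(I-A)^{-1}\boldsymbol{w}]^\top[(I-A)^{-1}_{i\cdot}\Sigma_{lk}[(I-A)^{-1}_{i\cdot}]^\top]_{lk}(I-A)^{-1}\boldsymbol{w}$.

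I expect the only real obstacle to be the index bookkeeping: the middle matrix $[(I-A)^{-1}_{i\cdot}\Sigma_{lk}[(I-A)^{-1}_{i\cdot}]^\top]_{lk}$ carries two layers of indices — the inner row indices contracted against the fixed row $B_{i\cdot}$ and the covariance blocks $\Sigma_{lk}$, and the outer indices $l,k$ that get weighted. The key observation that makes everything go through is that the weighting acts only on the outer indices and collapses them through $\sum_j w_j B_{\cdot j} = (I-A)^{-1}\boldsymbol{w}$, so that passing from the pair-level to the node-level result amounts to replacing the column selectors $(I-A)^{-1}_{\cdot j}$ by the single vector $(I-A)^{-1}\boldsymbol{w}$, while the inner $B_{i\cdot}\Sigma_{lk}B_{i\cdot}^\top$ structure is left untouched.
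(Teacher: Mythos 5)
Your proposal is correct and takes essentially the same route as the paper: the paper's proof likewise treats the node-level statistic as a linear combination of the jointly Gaussian pair-level statistics (all linear in $\Delta$), hence asymptotically normal with mean zero, with the stated variance obtained by collapsing the weighted columns via $\sum_j w_j (I-A)^{-1}_{\cdot j} = (I-A)^{-1}\boldsymbol{w}$. The only difference is that you spell out the cross-covariance bookkeeping (and the cancellation of the $-w_i$ term) that the paper leaves implicit in its one-line appeal to linearity.
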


The proof of \myref{Corollary}{cor:node-level_lcm} is given in \myref{Appendix}{app:node-level}.
\subsection{Hypothesis Testing}
In this section, we aim to explore a potential application of the node-level KB distribution as outlined in \myref{Corollary}{cor:node-level_lcm}. Specifically, we propose two statistical tests on the KB centrality to determine its reliability in practice.
First, we introduce a statistical test to determine whether the node~$i$ KB significantly exceeds zero, i.e., the node contributes significantly to the propagation of a shock within the network. A second hypothesis test examines whether the centrality values of two nodes differ significantly from each other.
\subsubsection{Different from Zero}\label{sec:diff_zero}
In this hypothesis test, we aim to determine whether the node-level centrality significantly differs from zero. This approach adds an additional layer of validation to the centrality measure, ensuring that the results are not merely driven by noise but are statistically meaningful.
\begin{equation}
\begin{gathered}
    H_0: (I-A)^{-1}_{i\cdot}\boldsymbol{w}-w_{i} = 0\, ,\\
    H_1: (I-A)^{-1}_{i\cdot}\boldsymbol{w}-w_{i} > 0\, .
    \label{eq:null_hyp}
\end{gathered}
\end{equation}

We note that when working with networks, constructing them effectively requires a substantial number of data points. Therefore, the asymptotic distribution of the network is sufficient. 
\begin{corollary}[Test Statistic and Distribution] 
Consider the setting of \myref{Theorem}{theo:lcm_dist} with the test statistic:
\begin{gather*}
Z := \frac{\sqrt{T}\left[(I-\hat{A})^{-1}_{i\cdot}\boldsymbol{w} - w_i\right]}{\left(\left[(I-\hat{A})^{-1}\boldsymbol{w}\right]^\top \left[(I-\hat{A})^{-1}_{i\cdot}\Sigma_{lk}\left[(I-\hat{A})^{-1}_{i\cdot}\right]^\top\right]_{lk}(I-\hat{A})^{-1}\boldsymbol{w}\right)^{\frac{1}{2}}} \,.
\end{gather*}
Under the null hypothesis, $Z \sim N(0,1)$ asymptotically.
\label{theo:t_dist}
\end{corollary}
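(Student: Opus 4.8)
The plan is to establish the result via a standard studentization argument: combine the asymptotic normality already provided by \myref{Corollary}{cor:node-level_lcm} with a consistency result for the plug-in variance estimator appearing in the denominator of $Z$, then invoke Slutsky's theorem. First I would decompose the numerator of $Z$ as
\[
\sqrt{T}\left[(I-\hat{A})^{-1}_{i\cdot}\boldsymbol{w} - w_i\right] = \sqrt{T}\left[(I-\hat{A})^{-1}_{i\cdot}\boldsymbol{w} - (I-A)^{-1}_{i\cdot}\boldsymbol{w}\right] + \sqrt{T}\left[(I-A)^{-1}_{i\cdot}\boldsymbol{w} - w_i\right].
\]
Under $H_0$ the second term is identically zero, since the null hypothesis is exactly $(I-A)^{-1}_{i\cdot}\boldsymbol{w} - w_i = 0$. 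Thus under $H_0$ the numerator reduces to the centered quantity $\sqrt{T}[(I-\hat{A})^{-1}_{i\cdot}\boldsymbol{w} - (I-A)^{-1}_{i\cdot}\boldsymbol{w}]$, which by \myref{Corollary}{cor:node-level_lcm} converges in distribution to $N(0,\sigma_i^2)$, where $\sigma_i^2$ denotes the asymptotic variance evaluated at the \emph{true} inverse $(I-A)^{-1}$.

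Next I would handle the denominator, which is the square root of the same variance expression but with $(I-A)^{-1}$ replaced throughout by its estimate $(I-\hat{A})^{-1}$, while $\Sigma$ is treated as known. Regarding the variance as a map $\sigma^2(\cdot)$ of the matrix argument, the denominator is $\sigma(\,(I-\hat{A})^{-1})$ whereas the limiting standard deviation is $\sigma(\,(I-A)^{-1}) = \sigma_i$. Under \myref{Assumption}{assumption:stationary} the map $A \mapsto (I-A)^{-1}$ is continuous in a neighborhood of $A$---the leading eigenvalue being strictly below one ensures $I-A$ is invertible and, for $T$ large, so is $I-\hat{A}$---and the quadratic form defining $\sigma^2(\cdot)$ is a polynomial, hence continuous, in the matrix entries. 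Given consistency of the estimator, $\hat{A} \xrightarrow{p} A$ as $T \to \infty$ (implicit in the $O(1/T)$ remainder of \myref{Equation}{eq:leon_taylor}), the continuous mapping theorem then yields $\hat{\sigma}_i \xrightarrow{p} \sigma_i$.

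Finally I would assemble the two pieces: with a numerator converging in distribution to $N(0,\sigma_i^2)$ and a denominator converging in probability to the positive constant $\sigma_i$, Slutsky's theorem delivers
\[
Z = \frac{\sqrt{T}\left[(I-\hat{A})^{-1}_{i\cdot}\boldsymbol{w} - w_i\right]}{\hat{\sigma}_i} \xrightarrow{d} \frac{N(0,\sigma_i^2)}{\sigma_i} = N(0,1),
\]
as claimed. I expect the main obstacle to be the consistency step for $\hat{\sigma}_i$: beyond requiring that $\hat{A}$ be a consistent estimator of $A$, one must ensure $\sigma_i > 0$ so that the division is well defined in the limit and Slutsky's theorem applies. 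A degenerate variance $\sigma_i = 0$ would arise only in knife-edge configurations where the relevant row of $(I-A)^{-1}$ is annihilated by the covariance structure $\Sigma$; excluding this case (or positing nondegeneracy) is the one genuinely delicate point, while the distributional and continuity arguments are otherwise routine.
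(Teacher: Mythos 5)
Your proposal is correct and takes essentially the same route as the paper's own proof, which likewise rests on consistency of $\hat{A}$ (via the weak law of large numbers), the continuous mapping theorem to get consistency of the plug-in quantities in the denominator, the asymptotic normality of \myref{Corollary}{cor:node-level_lcm} for the numerator, and Slutsky's theorem to combine them. Your writeup is in fact more explicit than the paper's three-sentence argument---in particular your use of $H_0$ to center the numerator and your caveat that the asymptotic variance $\sigma_i$ must be strictly positive for Slutsky's theorem to apply are points the paper leaves unstated.
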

This result directly follows from \myref{Corollary}{cor:node-level_lcm}, the Central Limit Theorem and Slutsky's Theorem. The formal proof is given in \myref{Appendix}{app:test-stat}.
\subsubsection{Different from Each Other}
To compare different shock scenarios, specifically evaluating the contagion risk when node~$i$ is shocked versus when node~$j$ is shocked, the following hypothesis test can be carried out:
\begin{equation}
\begin{gathered}
    H_0: (I-A)^{-1}_{i\cdot}\boldsymbol{w}-(I-A)^{-1}_{j\cdot}\boldsymbol{w} = 0\, ,\\
    H_1: (I-A)^{-1}_{i\cdot}\boldsymbol{w}-(I-A)^{-1}_{j\cdot}\boldsymbol{w} \neq 0\, .
    \label{eq:null_hyp_node}
\end{gathered}
\end{equation}
 \begin{corollary}[Test Statistic and Distribution] 
Consider the setting of \myref{Theorem}{theo:lcm_dist} with the test statistic:
\begin{gather*}
Z := \frac{\sqrt{T}\left[(I-\hat{A})^{-1}_{i\cdot}\boldsymbol{w} - (I-\hat{A})^{-1}_{j\cdot}\boldsymbol{w}\right]}{\left(Var\left[(I-\hat{A})^{-1}_{i\cdot}\boldsymbol{w}\right]+Var\left[(I-\hat{A})^{-1}_{j\cdot}\boldsymbol{w}\right]-Cov\left[(I-\hat{A})^{-1}_{i\cdot}\boldsymbol{w},(I-\hat{A})^{-1}_{j\cdot}\boldsymbol{w}\right]\right)^{\frac{1}{2}}}\, 
\end{gather*}
where the variances can be obtained from \myref{Corollary}{cor:node-level_lcm} and the covariance structure is given by:
\begin{align*}
Cov\left[(I-\hat{A})^{-1}_{i\cdot}\boldsymbol{w},(I-\hat{A})^{-1}_{j\cdot}\boldsymbol{w}\right]=\left[(I-\hat{A})^{-1}\boldsymbol{w}\right]^\top \left[(I-\hat{A})^{-1}_{i\cdot}\Sigma_{lk}\left[(I-\hat{A})^{-1}_{j\cdot}\right]^\top\right]_{lk}(I-\hat{A})^{-1}\boldsymbol{w}\,.
\end{align*}
Under the null hypothesis, $Z \sim N(0,1)$ asymptotically.
\label{theo:t_dist_diff}
\end{corollary}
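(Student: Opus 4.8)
The plan is to derive \myref{Corollary}{theo:t_dist_diff} as the two-node analogue of \myref{Corollary}{cor:node-level_lcm}, by combining the multivariate Delta method with Slutsky's theorem. Write $B := (I-A)^{-1}$ and $\hat B := (I-\hat A)^{-1}$ as shorthand. The first step is to strengthen the \emph{marginal} asymptotic normality of \myref{Corollary}{cor:node-level_lcm} into the \emph{joint} asymptotic normality of the pair $\sqrt{T}\,(\hat B_{i\cdot}\boldsymbol{w} - B_{i\cdot}\boldsymbol{w},\ \hat B_{j\cdot}\boldsymbol{w} - B_{j\cdot}\boldsymbol{w})$. Both coordinates are smooth (to first order, linear) functionals of the single estimated matrix $\hat A$, so the multivariate Delta method applied to the map $A \mapsto (B_{i\cdot}\boldsymbol{w},\, B_{j\cdot}\boldsymbol{w})$---whose first-order behaviour is exactly the linearization recorded in \myref{Equation}{eq:leon_taylor}---delivers a bivariate normal limit with mean zero. \myref{Assumption}{assumption:stationary} ensures $I-A$ is invertible, so the functional is differentiable near $A$ and the expansion is valid.

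The second and most delicate step is to identify the off-diagonal entry of the limiting covariance matrix. By \myref{Equation}{eq:leon_taylor}, the fluctuation of the node-$i$ centrality is $[B\,\Delta\,B\boldsymbol{w}]_i = \sum_{l,k} B_{il}\,\Delta_{lk}\,(B\boldsymbol{w})_k$, and analogously for node $j$. Taking the covariance over $\Delta$ and substituting the assumed column-covariance structure $\mathrm{Cov}(\Delta_{\cdot k},\Delta_{\cdot k'}) = \Sigma_{kk'}$ collapses the double sum into the bilinear form $[B\boldsymbol{w}]^\top \bigl[B_{i\cdot}\,\Sigma_{lk}\,(B_{j\cdot})^{\top}\bigr]_{lk}\,B\boldsymbol{w}$, which is exactly the covariance expression displayed in the statement; setting $j=i$ recovers the variance of \myref{Corollary}{cor:node-level_lcm}. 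The bookkeeping is the crux: one must track which indices are contracted by the inner rows $B_{i\cdot}, B_{j\cdot}$ versus the outer vectors $B\boldsymbol{w}$, and confirm that the shorthand $[\,\cdot\,]_{lk}$ assembles the $\Sigma_{lk}$ blocks in the same pattern as in the variance formula.

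Granted joint normality, the numerator $\sqrt{T}(\hat B_{i\cdot}\boldsymbol{w} - \hat B_{j\cdot}\boldsymbol{w})$ is a fixed linear combination of a bivariate normal vector and is therefore asymptotically univariate normal. Under $H_0$ of \myref{Equation}{eq:null_hyp_node} the centering $B_{i\cdot}\boldsymbol{w} - B_{j\cdot}\boldsymbol{w}$ is zero, so the limit has mean zero and variance $\mathrm{Var}[B_{i\cdot}\boldsymbol{w}] + \mathrm{Var}[B_{j\cdot}\boldsymbol{w}] - 2\,\mathrm{Cov}[B_{i\cdot}\boldsymbol{w}, B_{j\cdot}\boldsymbol{w}]$, which is the content of the denominator (the cross-term coefficient should be reconciled with the $-\mathrm{Cov}$ written in the statement). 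The denominator of $Z$ then replaces every $B$ by its plug-in estimate $\hat B$; since $\hat A \to A$ in probability and matrix inversion is continuous away from singularity (again by \myref{Assumption}{assumption:stationary}), the estimated variances and covariance are consistent for their population values, and Slutsky's theorem yields $Z \Rightarrow N(0,1)$.

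In summary, the limit theory itself is a routine Delta-method-plus-Slutsky argument inherited from \myref{Corollary}{cor:node-level_lcm}; the genuine work lies in the index-level derivation of the cross-covariance and in justifying joint (not merely marginal) normality. The cleanest way to dispose of the joint-normality concern is to observe that both node-level statistics are continuous images of the single asymptotically normal object $\sqrt{T}\,\mathrm{vec}(\hat A - A)$, so their joint limit---and hence every linear combination, including their difference---is automatically Gaussian.
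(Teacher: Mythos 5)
Your proposal is correct and follows the same basic route as the paper: linearize via the Taylor expansion of \myref{Equation}{eq:leon_taylor} (the Delta method), then combine consistency of the plug-in quantities with Slutsky's theorem. The difference is one of substance versus brevity: the paper's own proof (\myref{Appendix}{app:test-stat}) simply invokes the weak law of large numbers, the continuous mapping theorem, the CLT, and Slutsky, and never actually derives the cross-covariance formula nor addresses joint (as opposed to marginal) normality. The two pieces of real work in your write-up are precisely what the paper leaves implicit: (a) the observation that both node-level statistics are, to first order, linear images of the single asymptotically normal object $\sqrt{T}\,\mathrm{vec}(\hat{A}-A)$, so their joint limit and every linear combination of them is automatically Gaussian; and (b) the index-level contraction showing that $\mathrm{Cov}$ of the two fluctuations collapses to $\left[B\boldsymbol{w}\right]^\top\left[B_{i\cdot}\Sigma_{lk}\left(B_{j\cdot}\right)^\top\right]_{lk}B\boldsymbol{w}$ with $B=(I-A)^{-1}$, recovering the variance of \myref{Corollary}{cor:node-level_lcm} when $j=i$. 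Your parenthetical flag is also well taken and worth stating plainly: the asymptotic variance of the difference is $\mathrm{Var}_i+\mathrm{Var}_j-2\,\mathrm{Cov}_{ij}$, so the denominator as printed in the corollary, with coefficient $1$ on the covariance term, appears to contain a typo (it is correct only under the unstated convention that the displayed $\mathrm{Cov}$ term absorbs the factor of $2$); it is your explicit computation, not the paper's proof, that exposes this.
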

The proof follows identically to that of \myref{Corollary}{theo:t_dist} and is given in \myref{Appendix}{app:test-stat}.
\subsection{Numerical Validation}\label{sec:synth}
To evaluate the validity of the KB, i.e., alignment between the theoretical and empirical distributions and their true counterparts, we conduct a case study using simulated data. We generate a dataset of 3 linearly dependent time series with iid noise terms and length of 600. The empirical analysis involves 10,000 simulations to obtain a robust representation. Subsequently, we calculate the mean and variance of the theoretical and empirical distributions of the KB centrality measure and compare these distributions with the true values of the centrality.

\myref{Figure}{fig:dist_leon_theo_emp_tr} presents the node-level KB distributions. A close examination of the figures reveals a strong alignment between all three sets of values. The true centrality value falls within the 95\% confidence interval calculated from the theoretical distribution. 
We note that the theoretical distribution may exhibit a slight bias due to the theoretical distribution's mean being estimated from one realization of the adjacency matrix (see \myref{Figure}{fig:node2}).

\begin{figure}[!tb]
\begin{minipage}{\linewidth}
    \begin{subfigure}{0.33\linewidth}
    \centerline{\includegraphics[width=\linewidth]{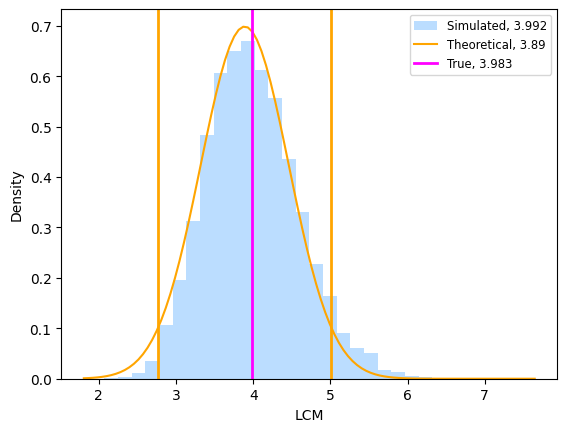}}
    \caption{\footnotesize{node 1}}
    \end{subfigure}\hfill
    \begin{subfigure}{0.33\linewidth}
    \centerline{\includegraphics[width=\linewidth]{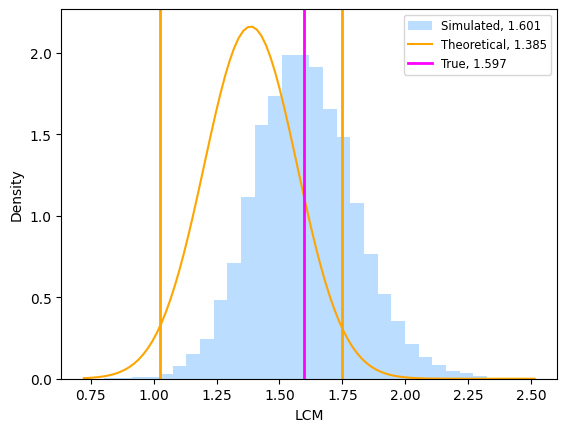}}
    \caption{\footnotesize{node 2}}
    \label{fig:node2}
    \end{subfigure}
    \begin{subfigure}{0.33\linewidth}
    \centerline{\includegraphics[width=\linewidth]{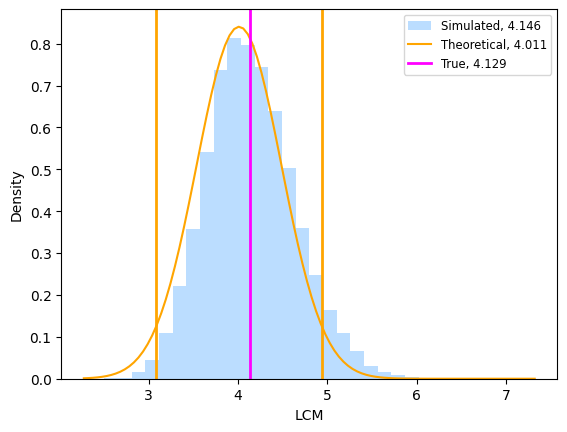}}
    \caption{\footnotesize{node 3}}
    \end{subfigure}
    \caption{\footnotesize{Distribution of node-level KB.}}\label{fig:dist_leon_theo_emp_tr}
\end{minipage}
\end{figure}
The QQ plots in \myref{Figure}{fig:qq_leon_theo_emp_tr} provide a visual comparison between the theoretical and empirical distributions. The points in the plots are close to the red regressed line, indicating similarity between the two distributions. However, slight deviations can be observed at the tails of the distributions. This is due to our use of the asymptotic theoretical distribution. Consequently, some bias is expected when finite data is considered, e.g., from the errors of order $1/T$ in the Taylor expansion of \myref{Equation}{eq:leon_taylor}.
\begin{figure}[!tb]
    \begin{subfigure}{0.33\linewidth}
    \centerline{\includegraphics[width=\linewidth]{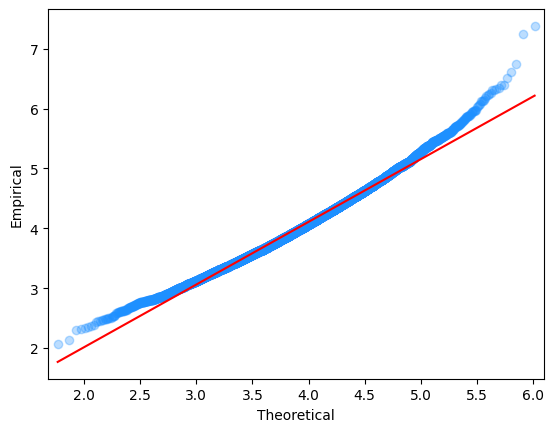}}
    \caption{\footnotesize{node 1}}
    \end{subfigure}\hfill
    \begin{subfigure}{0.33\linewidth}
    \centerline{\includegraphics[width=\linewidth]{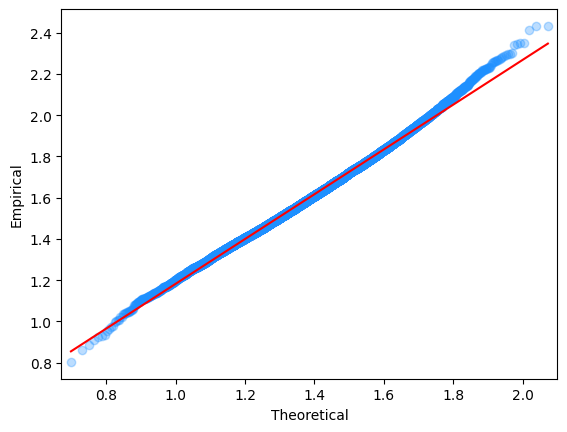}}
    \caption{\footnotesize{node 2}}
    \end{subfigure}
    \begin{subfigure}{0.33\linewidth}
    \centerline{\includegraphics[width=\linewidth]{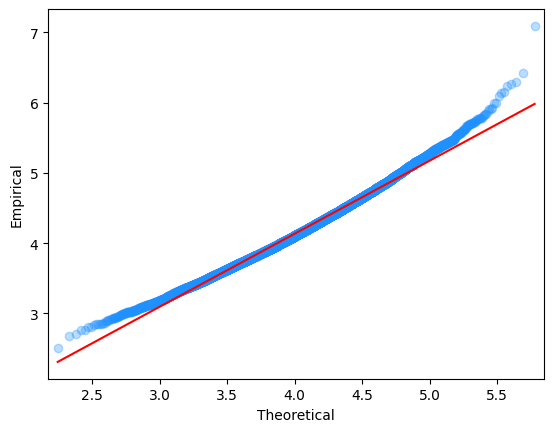}}
    \caption{\footnotesize{node 3}}
    \end{subfigure}
    \caption{\footnotesize{QQ plot of node-level KB comparing the theoretical and empirical distribution}.}\label{fig:qq_leon_theo_emp_tr}
\end{figure}
\section{Financial Case Studies}\label{sec:case}
In this section we present two case studies in order to demonstrate the financial implications of our framework. In the first case study, we utilize CDX spread data to construct networks, from which we analyze the KB for a spanning history of over 11 years. In the second one, we leverage high-frequency equity tick data to capture the information in the stock market. We highlight the years with turmoils to demonstrate how the KB reflects market behavior. For simplicity, we obtain the adjacency matrix from a Vector Autoregressive Linear System considering one lag (VAR(1)). For more details refer to \myref{Appendix}{app:linear_sys}.

\begin{remark}
Herein, to overcome parameter instability of the adjacency matrix when calibrated from a vector autoregressive model, we apply moving average smoothing to all following results. We refer the interested reader to, e.g., \cite{diebold9, kilian18, kumar22, diebold23} on difficulties with the use of vector autoregressive models.
\end{remark}
\subsection{Credit Default Swap Index Market Contagion}\label{sec:cds_data_new}
In most of the existing literature, financial networks are typically constructed based on contracts between institutions, particularly in the banking sector, where a core of large banks are connected to peripheral ones. Numerous studies have explored this core-periphery network structure, mostly in the context of interbank lending \citep{upper11,battiston15,bardoscia,bichuch18,blasque18,farbodi17}. However, only a limited number of studies have delved into derivative markets, specifically Credit Default Swaps (CDS) \citep{brun13,balsques14}.

Our dataset is obtained from Bloomberg, breaking down Markit's Investment Grade (IG) and High Yield (HY) CDX spread data\footnote{For more information on these indices, we refer the interested reader to \myref{Appendix}{app:case}.} into sectors based on the Global Industry Classification Standard (GICS)\footnote{The breakdown is at level 1 and excludes real estate.}. This breakdown yields 10 sectors for each grade, totaling 20 time series. These indices have been reporting since January 1, 2011. Our analysis spans from this starting date to August 31, 2023. We use a one-year time window, sliding it monthly, resulting in 146 time windows. This approach allows us to assess the contagion risk across different periods, enabling a deeper understanding of the derivative market's network's dynamics over time.

To make the time series stationary and reduce variability, we compute the daily log return of the time series 
$r_{i,t}=log(\frac{s_{i,t+1}}{s_{i,t}})$
where $s_{i,t}$ is the CDX spread of the sector $i$ at time $t$. Using the framework with 1-day lag, we build our VAR(1) network from these daily returns for each time window.
Afterwards, we calculate the KB for the system. For simplicity, we perform the analysis and statistical tests of \myref{Section}{sec:diff_zero} based on the unweighted centrality setting ($\boldsymbol{w} = \boldsymbol{1}$).

Visualizing the financial networks for February and March 2020 in \myref{Figure}{fig:cds_netw_fin}, we observe that the network in March 2020 exhibits thicker links compared to the network in February 2020, indicating stronger connectivity. Moreover, the March network appears to be more densely connected. Focusing on the March network, we observe a core-periphery like structure. The core nodes, characterized by more and thicker links either incoming or outgoing, feature the Consumer Discretionary and Energy indices. A more detailed analysis on what this entails is given below.
\begin{figure}[!tb]
\centering
\begin{minipage}{.45\textwidth}
  \centering
  \includegraphics[width=1\textwidth, height=190px]{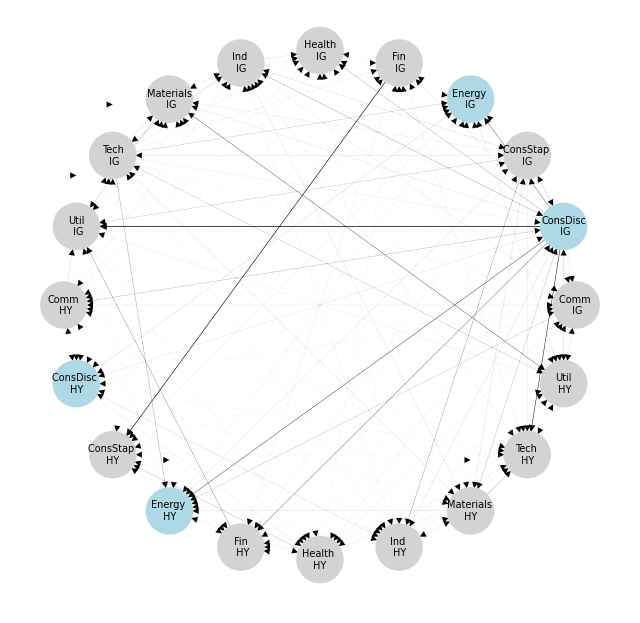}
\end{minipage}%
\begin{minipage}{.45\textwidth}
  \centering
  \includegraphics[width=1\textwidth, height=190px]{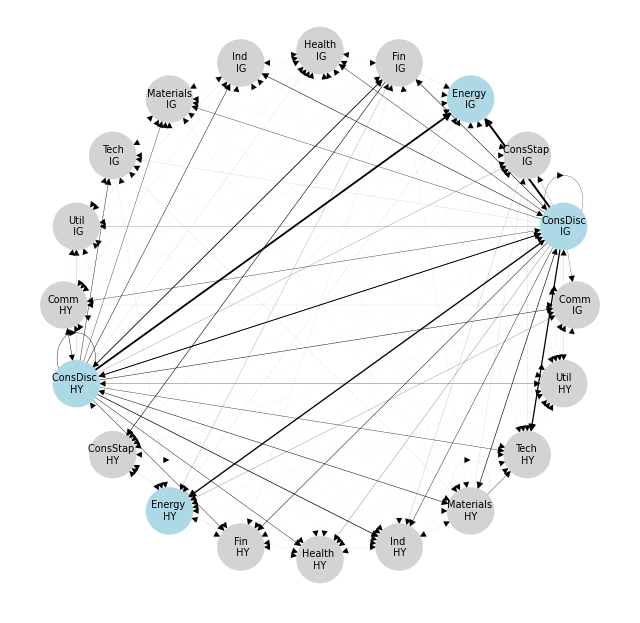}
\end{minipage}
\caption{\footnotesize{CDX network. \textbf{Left} February 2019-2020 \textbf{Right} March 2019-2020. The blue colored nodes are Energy and Consumer Discretionary, IG and HY}}
\label{fig:cds_netw_fin}
\end{figure}

In \myref{Figure}{fig:cdx_deg_eig}, we present the system-level KB centrality measure as a moving average across different time windows. The visualization indicates the overall level of contagion risk and sector instability, containing indices from both high yield and investment grade sectors. One key insight derived from the figure is the surge in KB during and after the Covid-19 pandemic, indicating a high contagion risk during this period \citep{yu21}. Comparing the time window from March 2019 to March 2020 with the preceding years, the measure obtains a value of about five times greater, reaching its peak in late 2020. Remarkably, when March is added to the time window, a sharp increase in KB is triggered instantly, while it gradually starts to drop close to the end of 2020. These dynamics are primarily influenced by several events related to the outbreak of Covid-19 in early 2020 leading to a widespread economic disruptions. The uncertainty triggered significant market volatility with concerns about the economic fallout, indicating a higher perceived risk. Additionally, there were vulnerabilities specific to sectors that were captured in the CDX data. Afterwards, the US government and Federal Reserve responded with monetary policy interventions and fiscal stimulus in order to support financial markets. Signs of recovery were shown also when vaccines were rolled out, as optimism grew regarding the potential economic rebound. Economic indicators improved and reduced the uncertainty about the long-term pandemic's impact. All these led to the gradual decline of risk and hence the KB measure.

We also notice a ramp up upon including mid-2022 data. This is mainly due to the increasing interest rates by Federal Reserve and some bankruptcy, merger and near-default events. It is worth mentioning that the elevated KB in late 2012 may be due to the lingering effects of the Great Recession and the 2011 European sovereign debt crisis. This analysis highlights the measure's robustness to economic events and interventions, showing its ability to capture the sector instability and contagion risk.

We plot the leading eigenvalue and thresholded degree centrality alongside KB in \myref{Figure}{fig:cdx_deg_eig} for the same adjacency matrix. We see that the leading eigenvalue exhibits some level of noise, displaying marginal increase during Covid-19 period. Towards the end of the date range, this value maintains a consistent trend without any notable elevation. Similarly, the degree centrality line appears rather stable throughout the figure. Though not displayed, we also calculated the betweenness and closeness centralities; these exhibited similar patterns to degree centrality. This suggests that leading eigenvalue and the other three centralities struggle to capture the contagion risk effectively.
\begin{figure}[!tb]
\centering
  \includegraphics[scale=0.6]{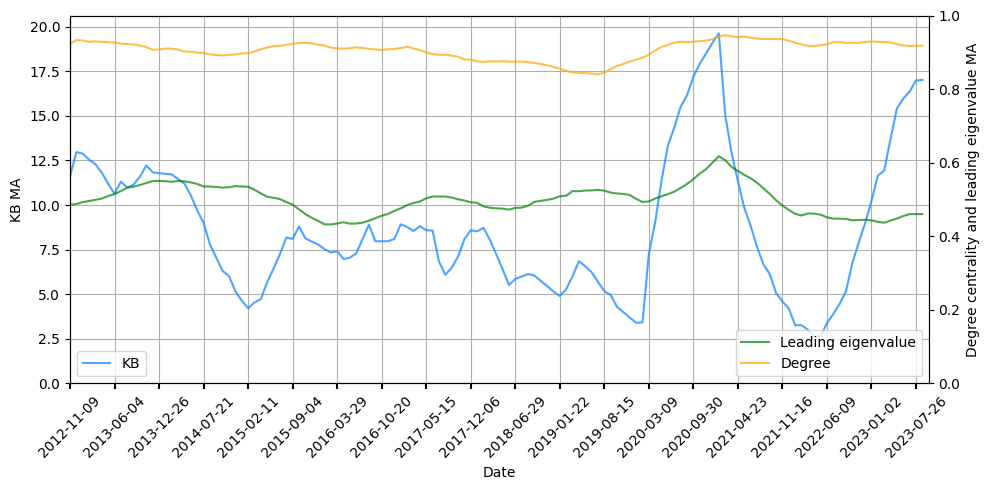}
\caption{\footnotesize{Moving average of CDX KB, degree centrality and leading eigenvalue}}
\label{fig:cdx_deg_eig}
\end{figure}

In \myref{Figure}{fig:cdx_lcm_dr}, we compare the KB to its unvalidated counterpart and the DebtRank measure \citep{debtrank} taken with respect to the same adjacency matrix. Unvalidated counterpart leaves out the hypothesis test in \myref{Equation}{eq:null_hyp}. DebtRank (DR) is a feedback centrality measure that distresses one node at a time and quantifies the propagation of systemic risk throughout the network. At each time step, nodes are classified as undistressed, distressed, or inactive.

From mid-2013 through the end of 2019, both the unvalidated KB and DR show a gradual rise, with DR beginning its increase earlier in 2013 and the unvalidated KB starting to climb in 2015. However, this period, from 2013 to 2019, was marked by relative financial stability,\footnote{\url{https://www.forbes.com/advisor/investing/bull-market-history/\#:~:text=Bull\%20Market\%20of\%202009\%2D2020\%3A\%20The\%20Great\%20Recession\%20Recovery}} with no significant systemic shocks, making the steady rise in both measures noticeable. However, this increase may reflect false alarms due to the lack of statistical validation. In contrast, the validated KB behaves differently. While some noise is present, there is no significant upward trend during this stable period. This highlights the importance of validation in filtering out spurious signals which allows the KB to reduce the likelihood of false positives.

During the Covid period, all three measures rise, capturing the financial turbulence. However, when examining DR, the increase in early 2021 is nearly identical in magnitude to the peak observed in September 2019, both reaching 0.8. The 2019 peak occurred without any clear signs of systemic instability, emphasizing the challenge of interpreting unvalidated metrics.

All three measures experience a decline after the end of 2021, followed by an upward trend as the financial system reacts to the risks introduced by rising interest rates. For more in-depth analysis, \myref{Table}{tab:cdx_deg_eig} shows the percentage changes in these measures during the two periods of financial stress.
\begin{figure}[!tb]
\centering
  \includegraphics[scale=0.6]{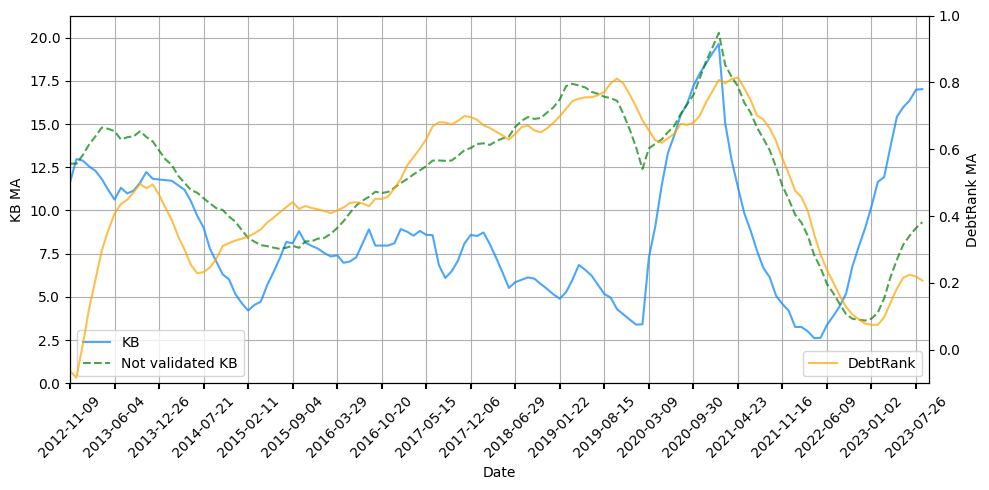}
\caption{\footnotesize{Moving average of CDX KB, not validated KB and DebtRank}}
\label{fig:cdx_lcm_dr}
\end{figure}
\begin{table}[!tb]
\footnotesize
\centering
\caption{\footnotesize{Relative change of centrality measures and leading eigenvalue during two periods of financial stress}}
\begin{tabular}{cccccc}
\hline\hline
Date & KB & Degree Centrality & Leading Eigenvalue & DebtRank & Unvalidated KB\\\hline
March 2019 - March 2020 & 5.327 & 0.123 & 0.025 & -0.387 & 0.763 \\
March 2022 - March 2023 & 7.260 & 0.003 & 0.077 & 3.773 & 8.533 \\ \hline
\end{tabular}
\label{tab:cdx_deg_eig}
\end{table}

\myref{Figure}{fig:cdx_consdisc} shows KB over time for the Consumer Discretionary sector. We display the unvalidated KB along with the 97.5\% confidence interval. When the confidence interval intersects the zero line, the validated KB is considered zero. In time window March 2019 to 2020, a notable increase in both investment grade and high yield CDX within the sector becomes evident. This time period coincides with the beginning and the outbreak of Covid-19 pandemic in U.S., disrupting economic activities across multiple sectors. Moreover, the lockdowns, social distancing and job losses affected the consumer behavior, leading to decreased spending on non-essential expenses such as leisure, travel and luxury items.

On top of this, the global spread of Covid-19, which had already affected China, disrupted supply chains worldwide. Disruptions in the production and distribution of goods created challenges for companies within the discretionary sector, hindering their access to raw materials and logistics. Consequently, many businesses in non-essential retail sector either closed or experienced a major decline in sales. Hence, due to the pandemic, consumer behavior shifted, supply chain systems had disruptions and economic uncertainties escalated. Combination of these factors resulted into a notable increase in the contagion risk and vulnerability in the U.S. consumer discretionary sector. For a comprehensive view of sector-specific figures of CDX KB, we refer the interested reader to \myref{Figure}{fig:cdx_sector} in the Appendix.
\begin{figure}[!tb]
\centering
\begin{subfigure}{0.45\textwidth}
  \includegraphics[width=\linewidth]{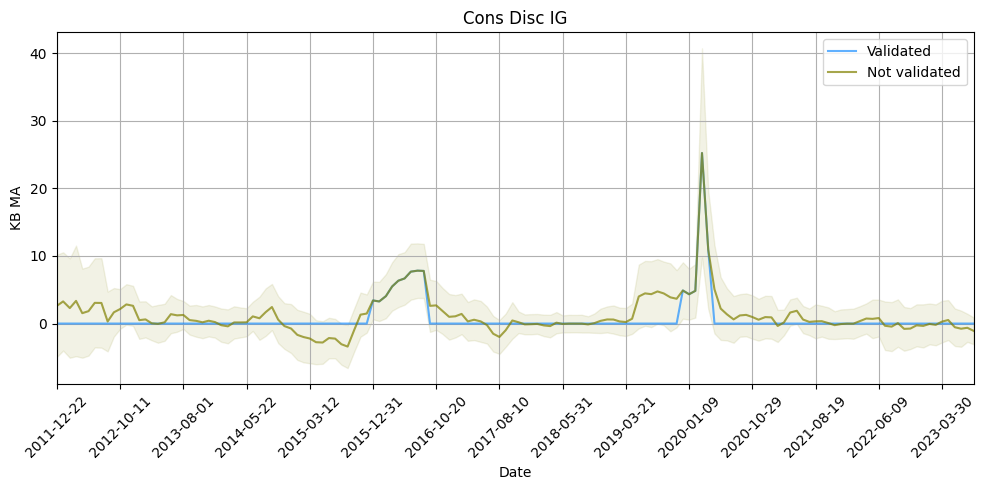}
\end{subfigure}\hfil
\begin{subfigure}{0.45\textwidth}
  \includegraphics[width=\linewidth]{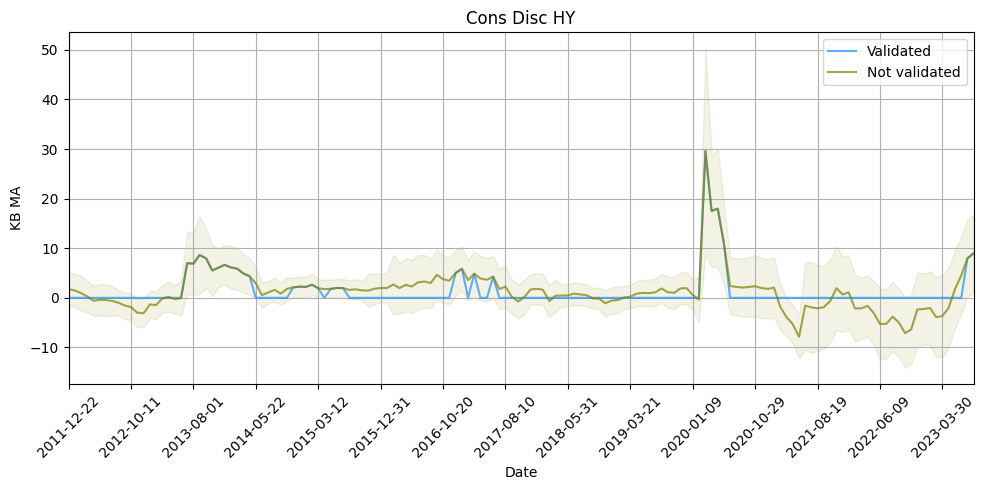}
\end{subfigure}
\caption{\footnotesize{Consumer Discretionary sector CDX KB \textbf{Left} Investment grade \textbf{Right} High yield}}
\label{fig:cdx_consdisc}
\end{figure}
%
\subsection{Equity Tick Market Contagion}\label{sec:tick_data}
The choice of data frequency is a critical consideration in network analysis. To ensure that dependencies are captured in the model, we must strike the right balance between a frequency that is not too low or too high. Given that stock prices swiftly adjust to market changes, especially to similar stock price fluctuations, and trading occurs at high frequency to capitalize on these changes, we opt for tick data. We use tick log return data for 20 Financials sector companies (provided in \myref{Table}{tab:cds_data} of \myref{Appendix}{sec:tick_data_table}) as well as both Lehman Brothers (LEH) and Bear Stearns (BSC). The tick data is obtained from Refinitiv, collected at one-second intervals, and spans from January 1, 2005 to December 31, 2010 aligning with \cite{jaramillo14}.

In \myref{Figure}{fig:tick_lcm}, we focus on the period surrounding the 2008 Financial Crisis to gain a deeper understanding of the centrality measure's behavior in the years preceding, during, and after the turmoil. We consider two versions of the system-level KB: (i) for the network including LEH and BSC, for which we run the analysis for as long as the data is available (October 8, 2008 indicated by a red dashed line in the figure) and (ii) for the network excluding these two firms. In both versions, we observe an upward trend starting from early 2007 with the peak KB reached in August 2008. A gradual decline is then observed when excluding LEH and BSC. This signifies that the contagion risk increased until reaching the peak of turmoil, subsequently decreasing after government interventions and as the economy and markets slowly returned to normalcy. This system-level trend closely aligns with the timeline of the financial crisis: the acquisition of Bear Sterns by JPMorgan on March 16, 2008, and the subsequent default of Lehman Brothers on September 15, 2008. The Wall Street Bailout law was passed on October 3, 2008, which included the \$700 billion Troubled Asset Relief Program (TARP) aimed at rescuing banks from defaulting.

We plot the node-level KB for three specific firms (American International Group (AIG), JPMorgan (JPM) and Goldman Sachs (GS)) alongside Lehman Brothers and Bear Stearns in \myref{Figure}{fig:tick_lcm}. We note that the node-level KB for LEH and BSC have similar behavior to the system-level KB until May 2008. Afterwards, they start to decline, due to which we experience a slight decline in the system-level KB as well. The node-level drop is aligned with recognition of the substantial risk these two companies carried by investors. Due to this market behavior, investors treated these companies idiosyncratically thus reducing the spillover effects to the rest of the system from shocks they might experience. This is why, despite the systemic contagion risk rising, the KB for LEH and BSC move inversely. This detachment is also reflected in the network structure. \myref{Figure}{fig:LEH_BSC_netw} shows the interdependencies between LEH and BSC and all other firms for mid March and September 2008 time windows. As is evident, the connectivity of these two companies diminishes in both quantity and significance during their corresponding drop in their KB.

Analyzing node-level KBs reveals some similarities and differences between the three tickers, AIG, JPM and GS, and the delisted ones. We notice that AIG, prominent in the insurance sector, exhibits a similar behavior to LEH and BSC; its KB took a downward turn following the bankruptcy and merger of LEH and BSC respectively. During the crisis, \$182 billion was injected into AIG to prevent its default by the Treasury and Federal Reserve Bank of New York, as it was believed that its collapse would jeopardize the integrity of other firms. Essentially, AIG came under government ownership and was safeguarded, which is reflected in its KB as it reaches nearly zero by early 2009. On the other hand, JPM and GS, for which the KB consistently increases even after October 8, 2008, diverge from the system-level KB. As these two companies were considered safer by investors, they maintained investments despite the systemic turmoil. However, under a shock scenarios to these companies, there would have been a high contagion risk. Looking into the stock prices of these tickers, we observe a significant decline in AIG's price, whereas JPM and GS experience a decrease but recover shortly afterwards. These patterns are aligned with the fluctuations seen in the KB, indicating its ability to mirror the dynamics of the market.
\begin{figure}[!tb]
\centering
   \centering
   \includegraphics[scale=0.6]{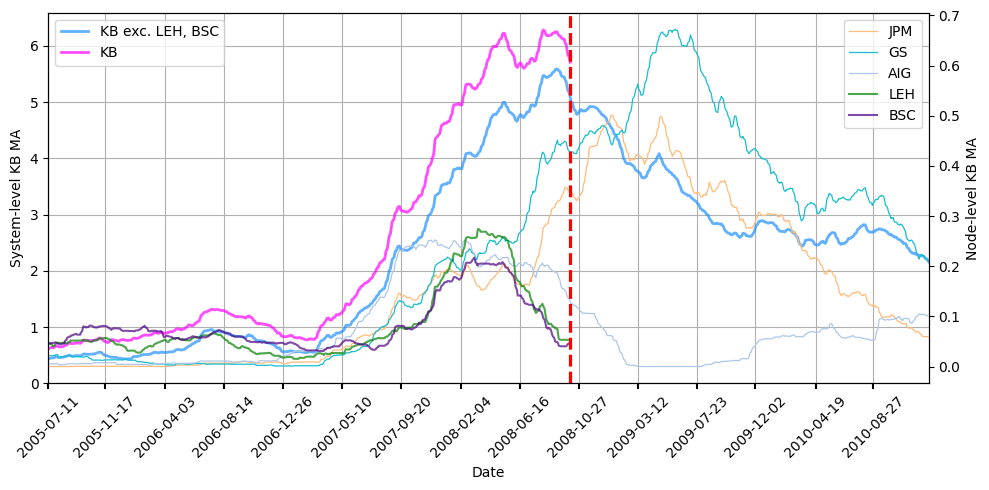}
    \caption{\footnotesize{Node- and system-level KB}}
\label{fig:tick_lcm}
\end{figure}
\begin{figure}[!tb]
\centering
\begin{minipage}{.45\textwidth}
  \centering
  \includegraphics[width=1\textwidth, height=170px]{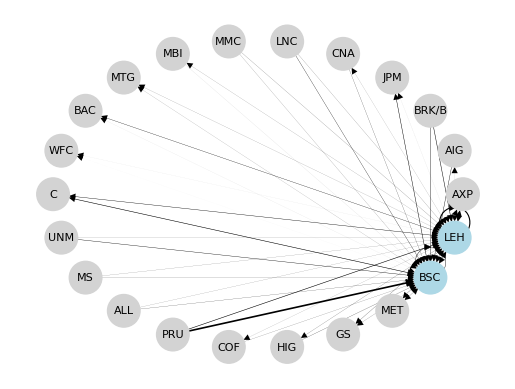}
\end{minipage}%
\begin{minipage}{.45\textwidth}
  \centering
  \includegraphics[width=1\textwidth, height=170px]{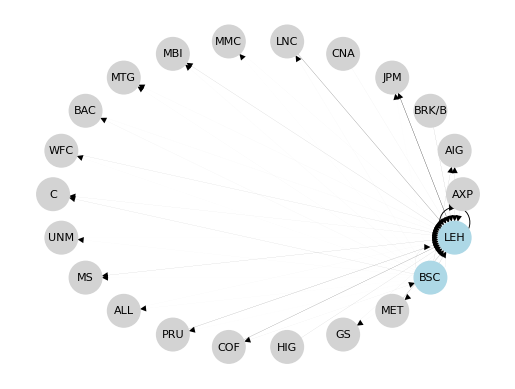}
\end{minipage}
\caption{\footnotesize{Incoming and outgoing edges of LEH and BSC network. \textbf{Left} Mid March 2008 \textbf{Right} Mid September 2008. The blue colored nodes are LEH and BSC}}
\label{fig:LEH_BSC_netw}
\end{figure}

\myref{Table}{tab:sys_node_dynamics} provides a concise summary of the logic above. When both node- and system-level KBs are on the rise, more instability is introduced coming from individual institutions. Conversely, a declining trend at the node-level while the system-level KB is increasing demands a closer look at the specific firm. Such divergence implies the firm is treated independently from the network and can imply different conditions: if the firm is struggling, portfolio detachment is triggered; if the firm is in a good shape, it is considered a safe investment opportunity. Similarly, when the system-level KB is decreasing while the node-level is increasing, it holds implications based on the firm's condition; a struggling firm means a high likelihood of instability, while a healthy one becomes densely interconnected within the network, drawing heavy reliance from investors. This interconnectedness elevates the risk of contagion under a shock scenario to these robust entities.
\begin{table}[!b]
\footnotesize
\centering
\caption{\footnotesize{Node- and system-level KB dynamics}}
\begin{tabular}{ccc}
\hline\hline
System-level & Node-level & Intuition \\\hline
$\uparrow$ & $\uparrow$ & More instability and risk in system (can result in a crisis)\\
$\uparrow$ & $\downarrow$ & Idiosyncratic treatment for the firm from investors \\
$\downarrow$ & $\uparrow$ & Heavy reliance on the firm \\
$\downarrow$ & $\downarrow$ & More stability and less risk in system (safe) \\
\hline
\end{tabular}
\label{tab:sys_node_dynamics}
\end{table}
\section{Conclusion}\label{sec:conc}
This paper presents a framework for constructing a weighted directed network, computing the centrality measure and statistically validating it. This offers a robust approach to assess contagion risk and vulnerability within financial systems. The KB centrality measure serves as a reliable proxy for contagion risk, allowing for meaningful comparisons across different time periods. To further enhance analysis of the network, we derive the distribution of the centrality measure, enabling the computation of confidence intervals and hypothesis testing. We show the alignment of the theoretical and empirical centrality distributions with the true centrality measure using simulated data. Leveraging real world financial data, CDX spread and equity tick data, we construct financial networks and evaluate the centrality of the system. During distressed years, we observe higher KB values, underscoring the measure's ability to capture systemic risk within the network and assess the investors' beliefs. KB can be used to form clusters of firms based on their susceptibility to contagion risk, thereby being useful for factor investing strategies \citep{peckman24} and risk management practices. Central institutions through this measure can be identified to determine which firms require more rigorous oversight.

Future research can broaden the scope of this work by exploring alternative network construction approaches, experimenting with different numbers of lags or markets \citep{mishra22}, and incorporating these variations into the network's structure. Such advancements could lead to the development of a multilayered centrality metric. From a financial perspective, applying this enhanced centrality measure to calculate risk metrics, such as Value at Risk, Expected Shortfall, CoVaR \citep{adrian16}, and MES \citep{acharya12}, would be particularly valuable. This extension could offer a more comprehensive understanding of the complex dynamics and interdependencies observed in financial data.
\section*{Disclosure of interest}
The authors have no conflicting interests to declare.
\section*{Declaration of funding }
No funding was received for this work.
\bibliographystyle{apalike}
\addcontentsline{toc}{chapter}{References}
{\footnotesize
\bibliography{bibb}}
\newpage
\appendix
\section{Proofs}\label{sec:app}
\subsection{Proof of Pair-Level KB Distribution (\myref{Theorem}{theo:lcm_dist})}\label{app:pair-level}
As we are focused on the asymptotic distribution of $(I-\hat{A})^{-1}_{ij}$, we consider the distribution associated with the first-order Taylor expansion
\[(I-\hat{A})^{-1}_{ij} = (I-A)^{-1}_{ij} + (I-A)^{-1}_{i\cdot}\Delta(I-A)^{-1}_{\cdot j}.\]
Under the assumptions of this theorem, the noise of the model $\Delta$ in \myref{Equation}{eq:a_hat} follows a multivariate normal distribution. In fact, due to the linear relation between the (first-order approximation of the) pair-level KB $(I-\hat{A})^{-1}_{ij}$ and the noise $\Delta$, it trivially follows that the pair-level KB must also follow a normal distribution. Thus, to complete this proof we only need to consider the mean and variance of this linear form.

First, we note that the mean of the noise $E[\Delta] = 0$ by assumption. Therefore, for any $T > 0\,$,
\[\sqrt{T} E\left[(I-\hat{A})^{-1}_{ij} - (I-A)^{-1}_{ij}\right] = 0\,.\]

Second, consider the variance of the noise. By the first-order Taylor expansion specified above,
\begin{align*}
var&\left(\sqrt{T}[(I-\hat{A})^{-1}_{ij} - (I-A)^{-1}_{ij}]\right) = T\,var\left((I-A)^{-1}_{i\cdot}\Delta(I-A)^{-1}_{\cdot j}\right)\\
&= T\left[(I-A)^{-1}_{\cdot j}\right]^\top var\left((I-A)^{-1}_{i\cdot}\Delta\right)(I-A)^{-1}_{\cdot j} \\
&= \left[(I-A)^{-1}_{\cdot j}\right]^\top \left[(I-A)^{-1}_{i\cdot}\Sigma_{lk}^T\left[(I-A)^{-1}_{i\cdot}\right]^\top\right]_{lk}(I-A)^{-1}_{\cdot j}\,,
\end{align*}
where $\Sigma_{lk}^T$ provides the covariance between $\Delta_{\cdot l}$ and $\Delta_{\cdot k}$ (with multiplier $\sqrt{T}$).
\subsection{Proof of Node-Level KB Distribution (\myref{Corollary}{cor:node-level_lcm})}\label{app:node-level}

This result follows trivial as the distribution of a linear combination of a multivariate normal is itself normally distributed. Following directly from \myref{Theorem}{theo:lcm_dist}, asymptotically, we find:
\begin{gather*}
\sqrt{T}E\left[(I-\hat{A})^{-1}_{i\cdot}\boldsymbol{w}-(I-A)^{-1}_{i\cdot}\boldsymbol{w}\right]=0\,,\\
T\,var\left((I-\hat{A})^{-1}_{i\cdot}\boldsymbol{w}-(I-A)^{-1}_{i\cdot}\boldsymbol{w}\right)=\left[(I-A)^{-1}\boldsymbol{w}\right]^\top \left[(I-A)^{-1}_{i\cdot}\Sigma_{lk}\left[(I-A)^{-1}_{i\cdot}\right]^\top\right]_{lk}(I-A)^{-1}\boldsymbol{w}\,.
\end{gather*}
\subsection{Proof of Test Statistic's Distribution (\myref{Corollary}{theo:t_dist}, \myref{Corollary}{theo:t_dist_diff})} \label{app:test-stat}
First, taking advantage of the weak law of large numbers, the estimated adjacency matrix $\hat{A}$ converges in probability to the true adjacency matrix $A$. By the continuous mapping theorem, the estimated Leontief inverse $(I-\hat{A})^{-1}$ converges in probability to the true Leontief inverse $(I-A)^{-1}$. Therefore, based on the Central Limit Theorem and Slutsky's Theorem, the test statistic $Z$ asymptotically follows a standard normal distribution. In addition, we note that this convergence does not introduce any bias due to the asymptotics used herein.
\section{Vector Autoregressive System (\myref{Section}{sec:case})} \label{app:linear_sys}
As we calibrate the adjacency matrix as the OLS estimates, we recall this construction for a generic multivariate model.
Given two matrices $X, Y \in \mathbb{R}^{T\times M}$ (over real numbers), where $T$ represents the number of time steps and $M$ denotes the number of variables, and an iid noise term $\mathcal{E}\in \mathbb{R}^{T\times M}$, we can adopt a linear regression framework as below:
\begin{equation*}
    \begin{gathered}
    Y=XA+\mathcal{E}\, ,\\
    \label{eq:a_hat}\hat{A}=(X^\top X)^{-1}X^\top Y\, ,
    \end{gathered}
\end{equation*}
where $A$ is the true coefficient matrix that governs the relationships between the elements of $X$ and $Y$. On the other hand, $\hat{A}$ represents the estimated coefficient matrix, which is obtained from the regression analysis. Define $Q := \lim_{T \to \infty} T^{-1} X^\top X$ (with limit taken in probability as the number of data points grows to infinity) and set $\Sigma_{lk} := \rho_{lk}\sigma_l\sigma_k Q^{-1}$ for each $l,k$ where $\rho_{lk}$ denotes the correlation coefficient between time series $l,k$ and $\sigma_k$ denotes the standard deviation of time series $k$ residuals.
\section{Additional Details and Figures for \myref{Section}{sec:cds_data_new}} \label{app:case}
\paragraph{Data} CDX spread data, which are available as Markit's North American Investment Grade and High Yield indices, closely reflect the credit quality and direction of the underlying basket in one tradable instrument. These indices are domiciled in North America, have 5 year tenor, USD denomination, quarterly coupon frequency, without restructuring, and semi-annual roll periods in March and September. The CDX IG is composed of 125 equally weighted credit default swaps on investment grade entities traded in the CDS market. Meanwhile, the CDX HY comprises 100 non-investment grade liquid entities within the CDS market.
\begin{figure}[htb]
    \centering
\begin{subfigure}{0.45\textwidth}
  \includegraphics[width=\linewidth]{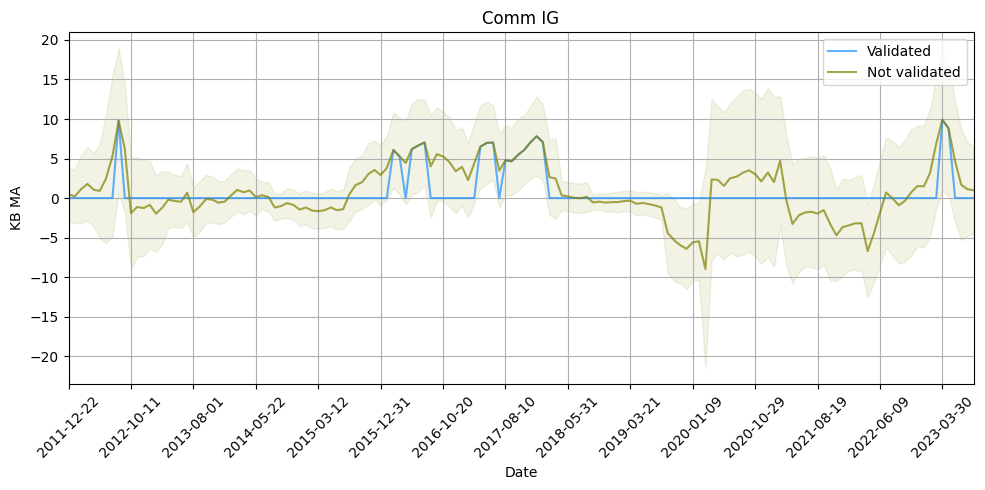}
  \caption{\footnotesize{Communication Services IG}}
\end{subfigure}\hfil
\begin{subfigure}{0.45\textwidth}
  \includegraphics[width=\linewidth]{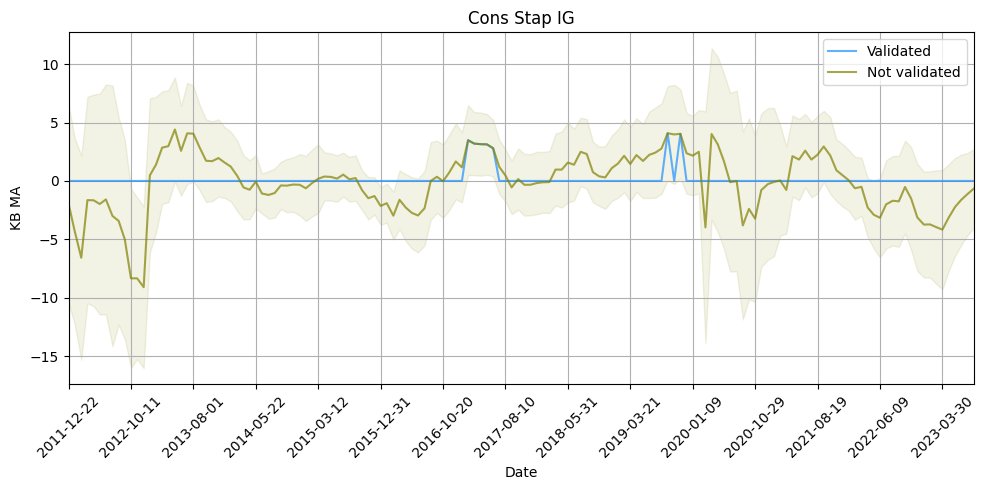}
  \caption{\footnotesize{Communication Services HY}}
\end{subfigure}

\begin{subfigure}{0.45\textwidth}
  \includegraphics[width=\linewidth]{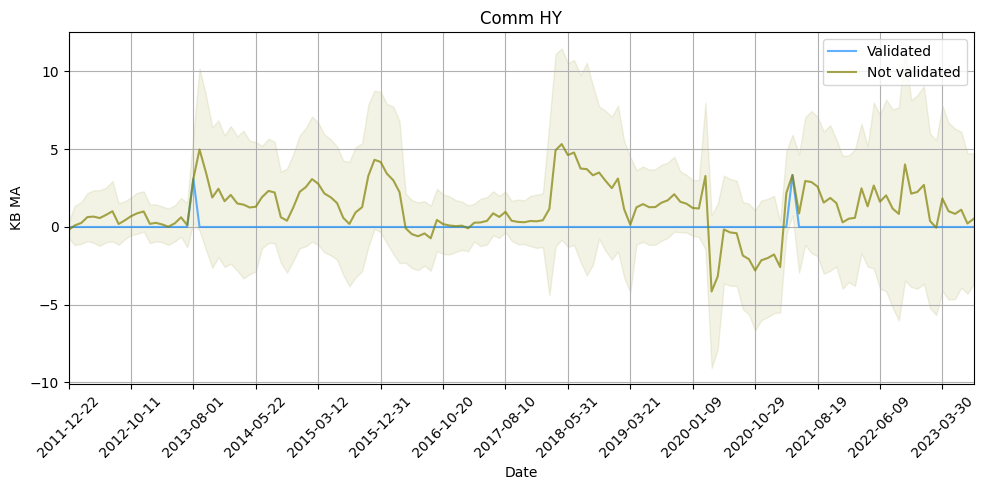}
  \caption{\footnotesize{Consumer Staples IG}}
  \label{fig:4}
\end{subfigure}\hfil
\begin{subfigure}{0.45\textwidth}
  \includegraphics[width=\linewidth]{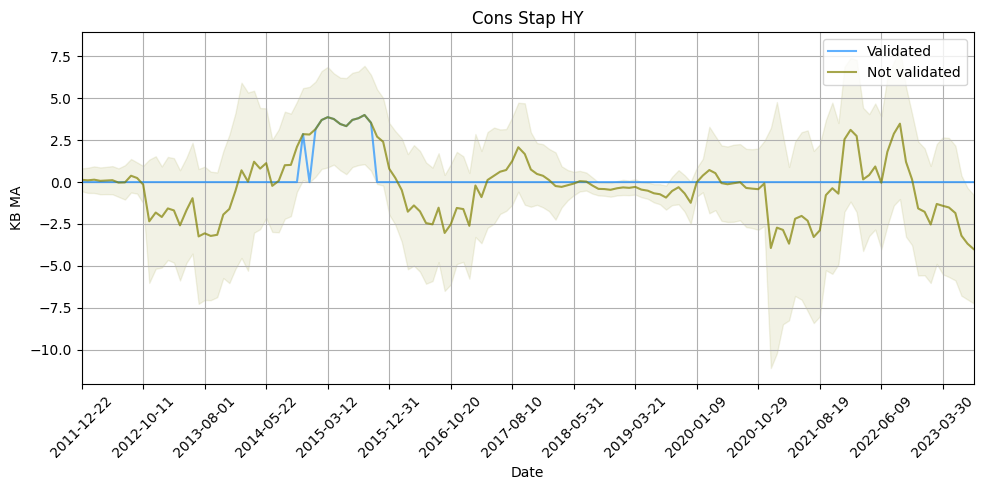}
  \caption{\footnotesize{Consumer Staples HY}}
  \label{fig:5}
\end{subfigure}\hfil

\begin{subfigure}{0.45\textwidth}
  \includegraphics[width=\linewidth]{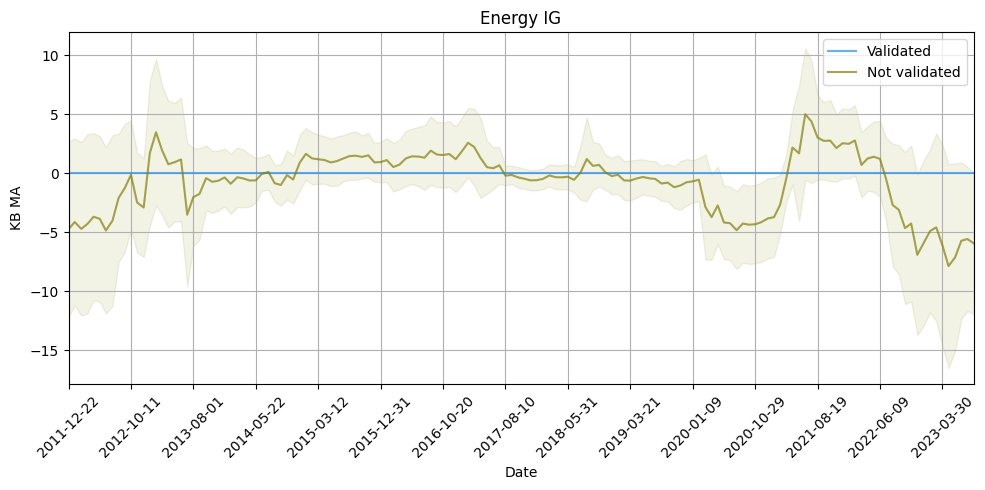}
  \caption{\footnotesize{Energy IG}}
  \label{fig:1}
\end{subfigure}\hfil
\begin{subfigure}{0.45\textwidth}
  \includegraphics[width=\linewidth]{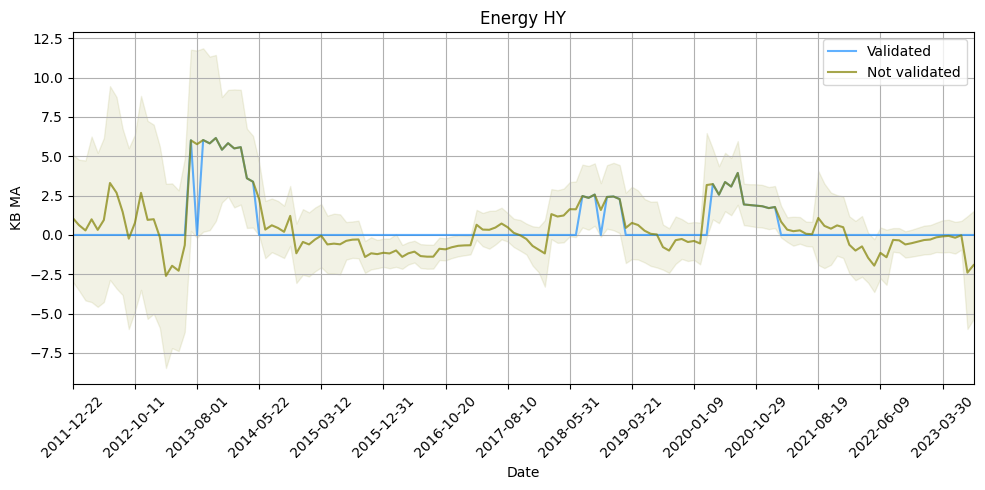}
  \caption{\footnotesize{Energy HY}}
  \label{fig:2}
\end{subfigure}

\begin{subfigure}{0.45\textwidth}
  \includegraphics[width=\linewidth]{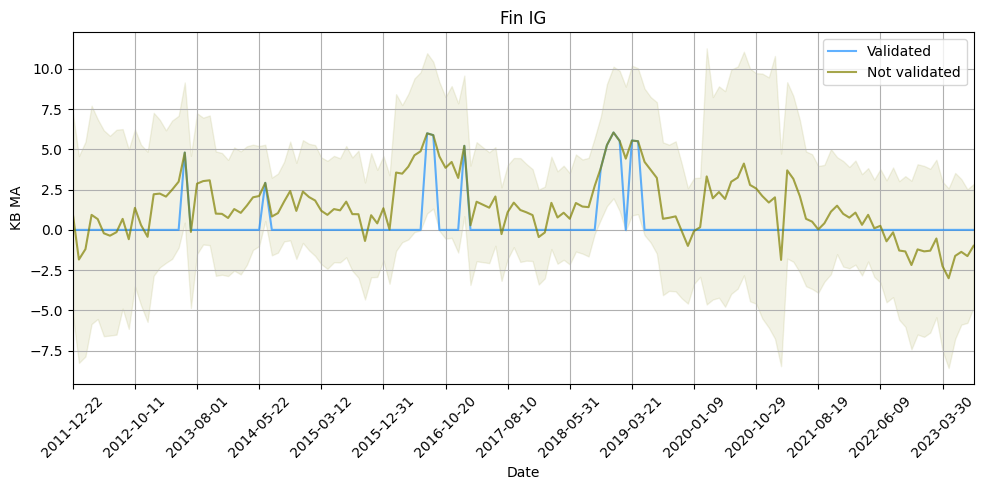}
  \caption{\footnotesize{Financials IG}}
  \label{fig:1}
\end{subfigure}\hfil
\begin{subfigure}{0.45\textwidth}
  \includegraphics[width=\linewidth]{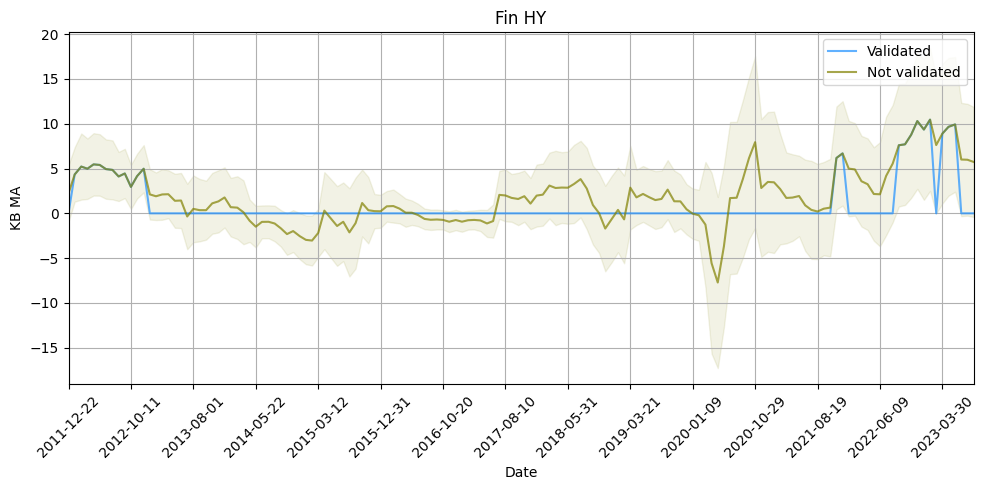}
  \caption{\footnotesize{Financials HY}}
  \label{fig:2}
\end{subfigure}

\begin{subfigure}{0.45\textwidth}
  \includegraphics[width=\linewidth]{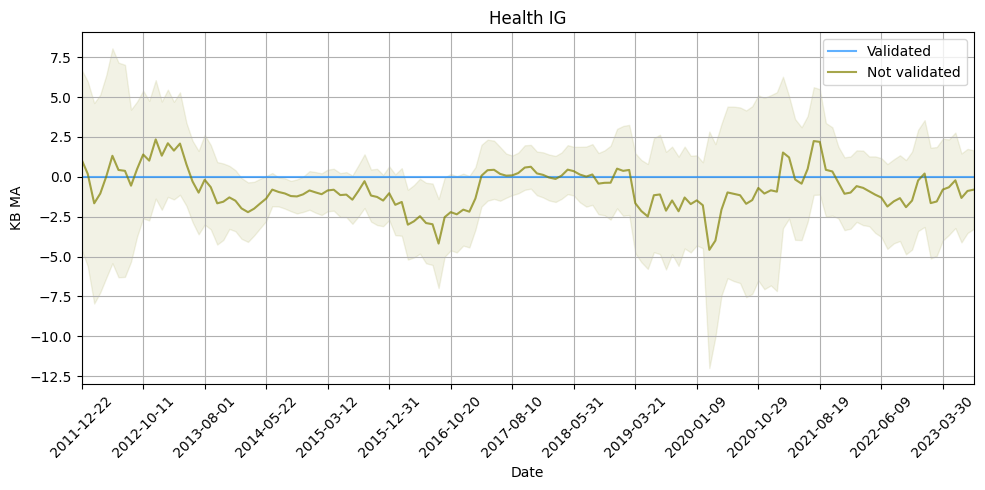}
  \caption{\footnotesize{Healthcare IG}}
  \label{fig:1}
\end{subfigure}\hfil
\begin{subfigure}{0.45\textwidth}
  \includegraphics[width=\linewidth]{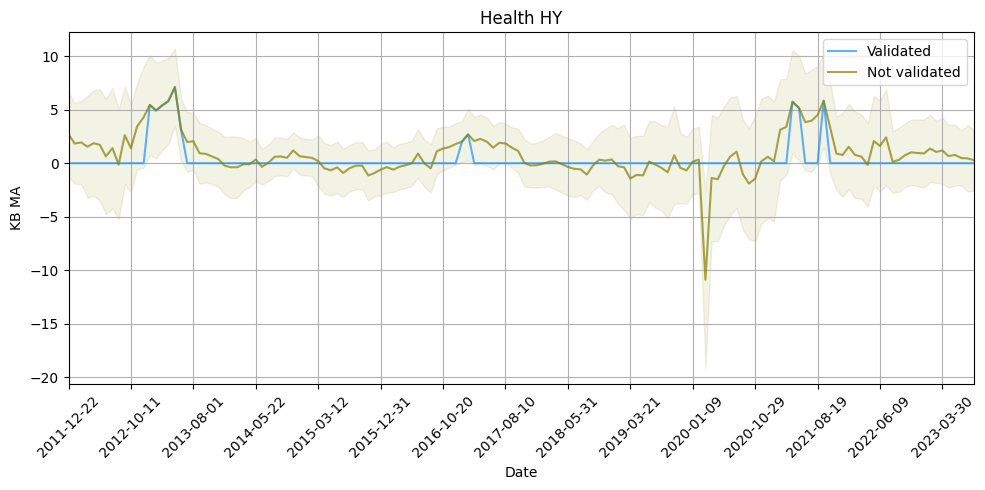}
  \caption{\footnotesize{Healthcare HY}}
\end{subfigure}
\label{fig:images}
\end{figure}

\begin{figure}[htb]\ContinuedFloat
    \centering
\begin{subfigure}{0.45\textwidth}
  \includegraphics[width=\linewidth]{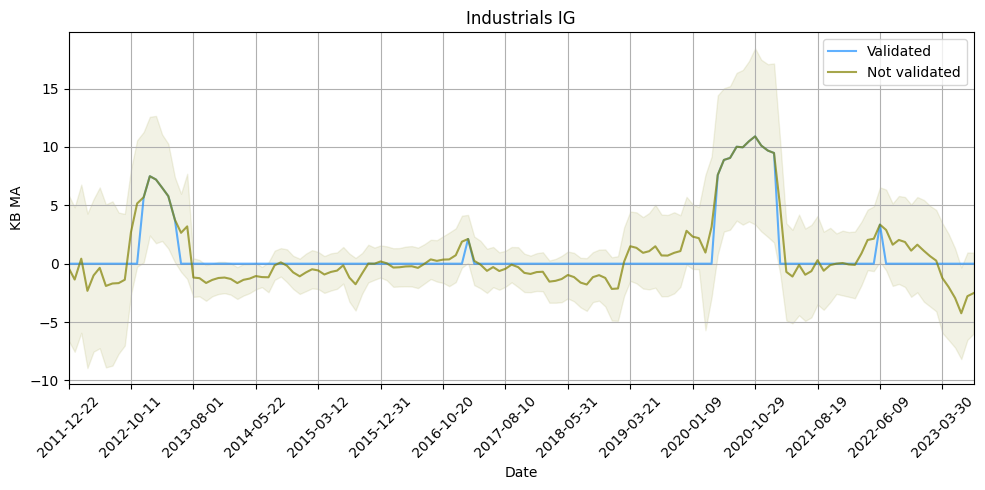}
  \caption{\footnotesize{Industrials IG}}
  \label{fig:1}
\end{subfigure}\hfil
\begin{subfigure}{0.45\textwidth}
  \includegraphics[width=\linewidth]{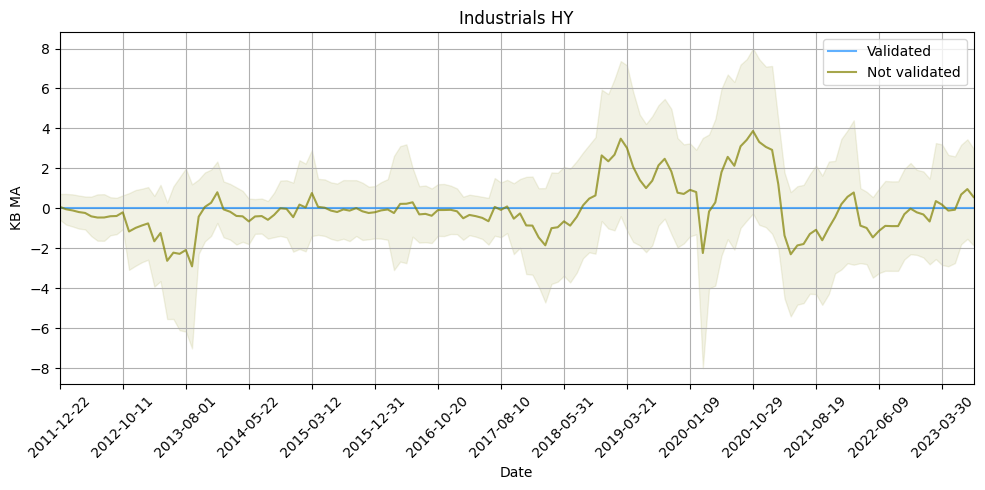}
  \caption{\footnotesize{Industrials HY}}
  \label{fig:2}
\end{subfigure}

\begin{subfigure}{0.45\textwidth}
  \includegraphics[width=\linewidth]{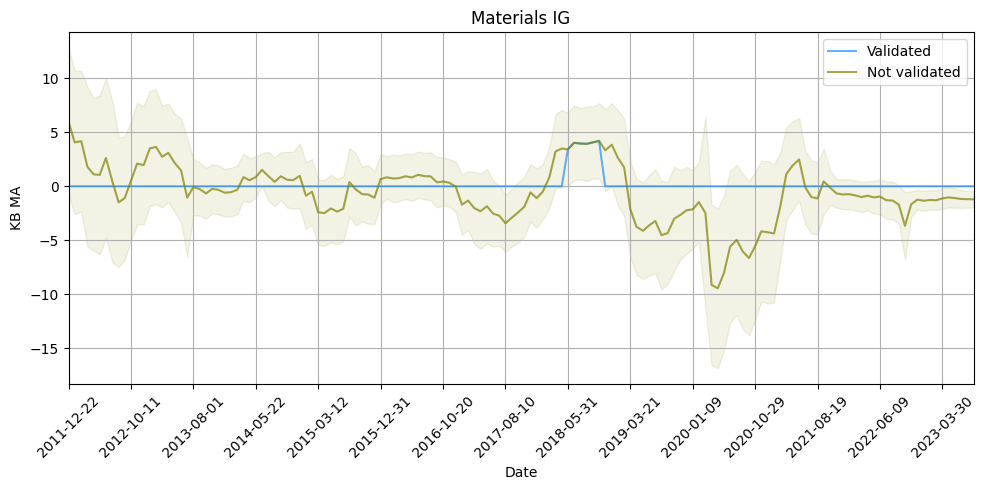}
  \caption{\footnotesize{Materials IG}}
  \label{fig:1}
\end{subfigure}\hfil
\begin{subfigure}{0.45\textwidth}
  \includegraphics[width=\linewidth]{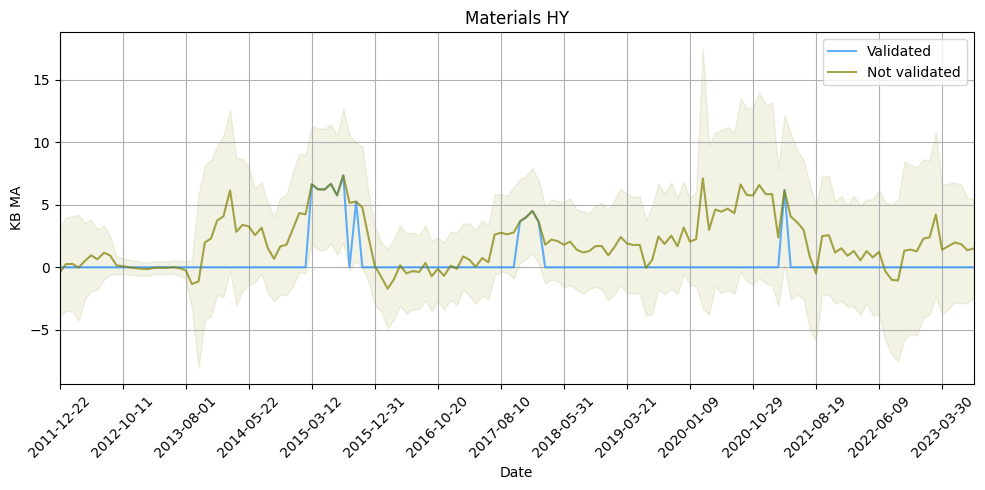}
  \caption{\footnotesize{Materials HY}}
  \label{fig:2}
\end{subfigure}

\begin{subfigure}{0.45\textwidth}
  \includegraphics[width=\linewidth]{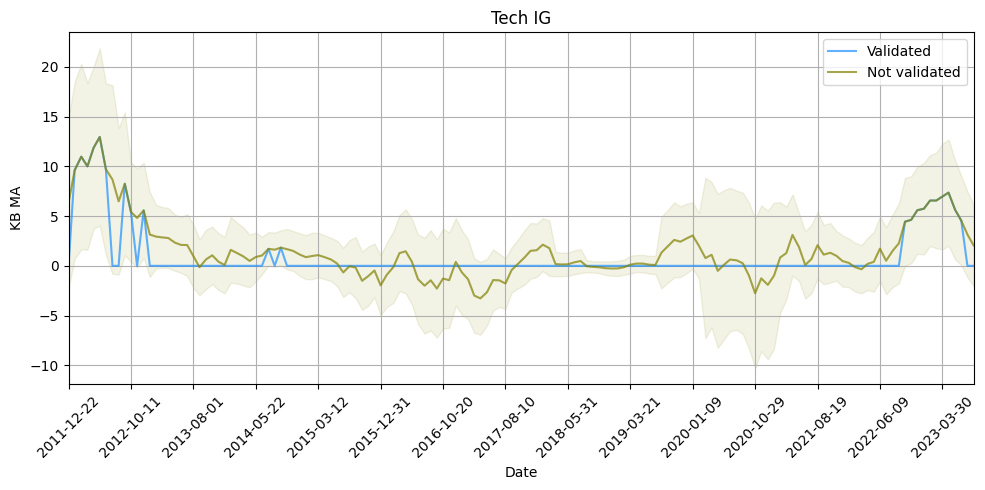}
  \caption{\footnotesize{Technology IG}}
  \label{fig:1}
\end{subfigure}\hfil
\begin{subfigure}{0.45\textwidth}
  \includegraphics[width=\linewidth]{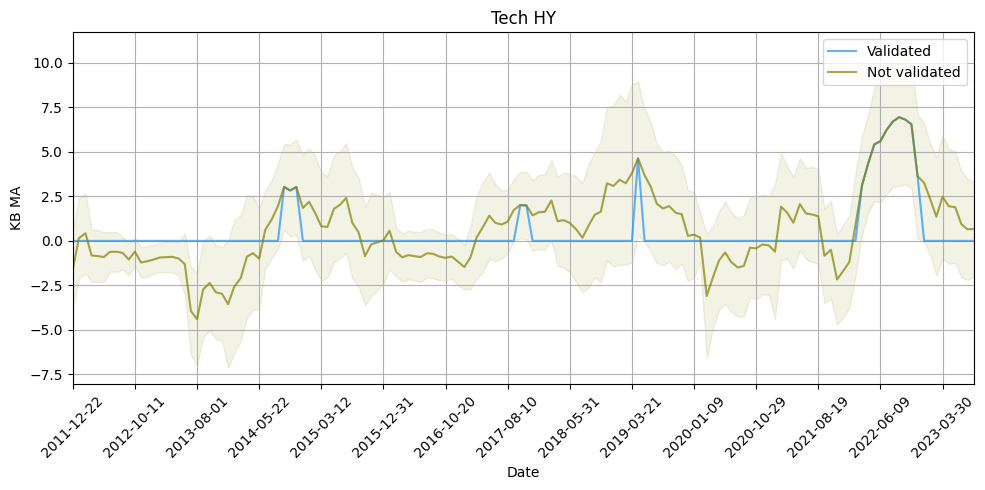}
  \caption{\footnotesize{Technology HY}}
  \label{fig:2}
\end{subfigure}

\begin{subfigure}{0.45\textwidth}
  \includegraphics[width=\linewidth]{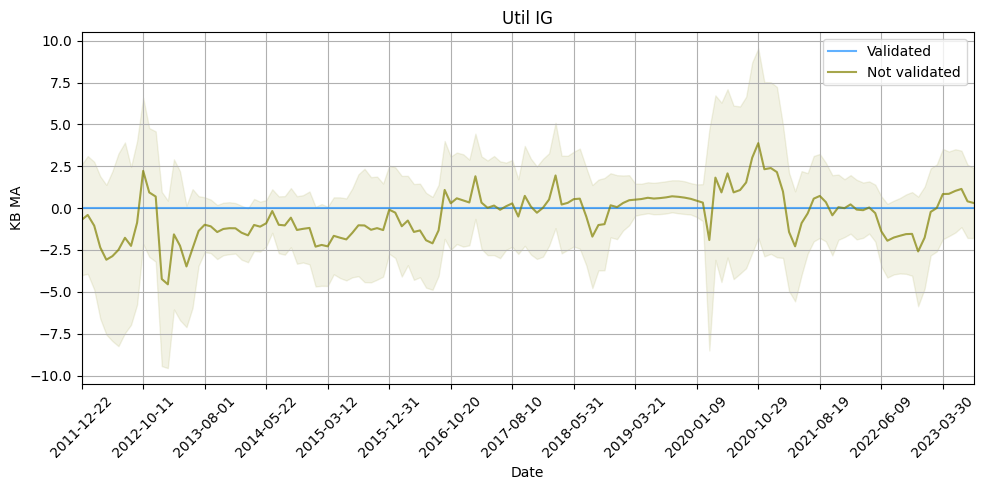}
  \caption{\footnotesize{Utilities IG}}
  \label{fig:1}
\end{subfigure}\hfil
\begin{subfigure}{0.45\textwidth}
  \includegraphics[width=\linewidth]{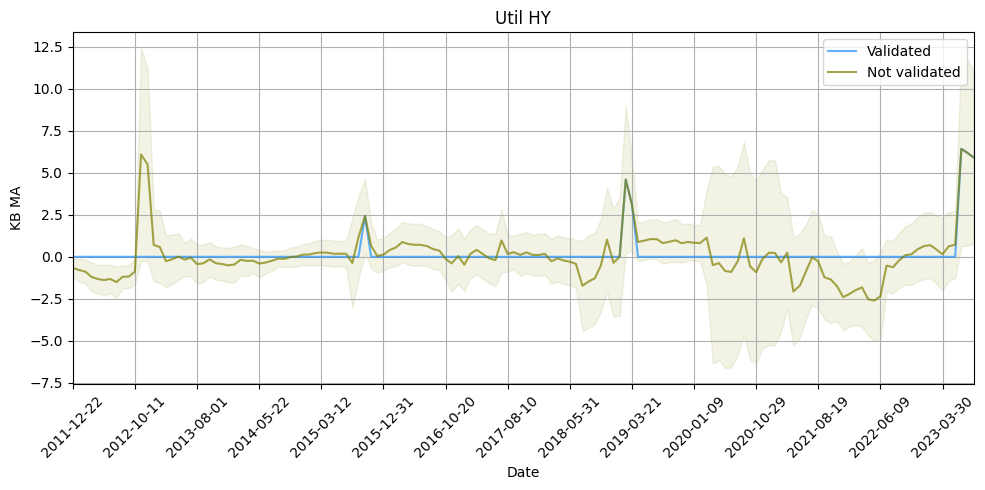}
  \caption{\footnotesize{Utilities HY}}
  \label{fig:2}
\end{subfigure}
\caption{\footnotesize{Validated, not validated and shaded confidence interval of CDX KB.}}
\label{fig:cdx_sector}
\end{figure}
\restoregeometry
\section{Details of the Data Set for \myref{Section}{sec:tick_data}}\label{sec:tick_data_table}
\footnotesize
\begin{table}[H]
\footnotesize
\centering
\caption{\footnotesize{Financial sector companies}}
\begin{tabular}{cccc}
\hline\hline
Ticker & Short name & Industry Group\tablefootnote{Based on Bloomberg Industrial Classification Standard (BICS) level 2.} \\\hline
AXP & American Express Co & Financial Services \\
AIG & American International Group & Insurance \\
BRK/B & Berkshire Hathaway Inc-CL B & Insurance \\
JPM & JPMorgan Chase \& Co & Banking \\
CNA & CNA Financial Corp & Insurance \\
LNC & Lincoln National Corp & Insurance \\
MMC & Marsh \& Mclennan Cos & Insurance \\
MBI & Mbia Inc & Insurance \\
MTG & Mgic Investment Corp & Financial Services \\
BAC & Bank of America Corp & Banking \\
WFC & Wells Fargo \& Co & Banking \\
C & Citigroup Inc & Banking \\
UNM & Unum Group & Insurance \\
MS & Morgan Stanley & Financial Services \\
ALL & Allstate Corp & Insurance \\
PRU & Prudential Financial Inc & Insurance \\
COF & Capital One Financial Corp & Financial Services \\
HIG & Hartford Financial SVCS GRP & Insurance \\
GS & Goldman Sachs Group Inc & Financial Services \\
MET & Metlife Inc & Insurance \\ \hline
\end{tabular}
\label{tab:cds_data}
\end{table}
\end{document}